\theoremstyle{plain}
\newtheorem{theorem}{Theorem}
\newtheorem{lemma}[theorem]{Lemma}
\theoremstyle{definition}
\theoremstyle{remark}
\newtheorem{remark}[theorem]{Remark}
\setlist[itemize]{label=--}
\setlist[enumerate]{label=(\arabic*),labelindent=\parindent,leftmargin=*}
\DeclarePairedDelimiter\braces{\{}{\}}
\NewDocumentCommand\set{O{}mg}{\ensuremath{\braces[#1]{#2\IfNoValueTF{#3}{}{\,:\,#3}}}}
\DeclareMathOperator{\dist}{dist}
\DeclareMathOperator{\mypolylog}{polylog}
\newcommand{\N}{\mathbb{N}}
\newclass{\lcl}{LCL}
\newclass{\LLL}{LLL}
\newclass{\local}{LOCAL}
\newcommand{\A}{\ensuremath{\mathcal A}}
\newcommand{\namedref}[2]{\hyperref[#2]{#1~\ref*{#2}}}
\newcommand{\sectionref}[1]{\namedref{Section}{#1}}
\newcommand{\theoremref}[1]{\namedref{Theorem}{#1}}
\newcommand{\figureref}[1]{\namedref{Figure}{#1}}
\newcommand{\lemmaref}[1]{\namedref{Lemma}{#1}}
\newcommand{\tableref}[1]{\namedref{Table}{#1}}
\newcommand{\MCOUNT}{\mbox{\sc count}}
\newcommand{\MPOWK}{\mbox{\sc pow}}
\newcommand{\MEXP}{\mbox{\sc exp}}
\newcommand{\MROOTK}{\mbox{\sc root}}
\newcommand{\MLOG}{\mbox{\sc log}}
\newcommand{\dirup}{\ensuremath{\mathsf U}}
\newcommand{\dirdown}{\ensuremath{\mathsf D}}
\newcommand{\dirleft}{\ensuremath{\mathsf L}}
\newcommand{\dirright}{\ensuremath{\mathsf R}}
\newcommand{\error}{\ensuremath{\mathsf E}}
\newenvironment{myabstract}
{\list{}{\listparindent 1.5em%
		\itemindent    \listparindent
		\leftmargin    1cm
		\rightmargin   1cm
		\parsep        0pt}%
	\item\relax}
{\endlist}
\newenvironment{mycover}
{\list{}{\listparindent 0pt
		\itemindent    \listparindent
		\leftmargin    1cm
		\rightmargin   1cm
		\parsep        0pt}%
	\raggedright
	\item\relax}
{\endlist}
\newenvironment{program}
{\begin{oframed}\begin{algorithmic}}
{\end{algorithmic}\end{oframed}}
\newcommand{\myemail}[1]{\,$\cdot$\, {\small #1}}
\newcommand{\myaff}[1]{\,$\cdot$\, {\small #1}\par\medskip}
\begin{document}

\begin{mycover}
	{\huge\bfseries\boldmath New Classes of Distributed Time Complexity \par}
	\bigskip
	\bigskip

	\textbf{Alkida Balliu}
	\myemail{alkida.balliu@aalto.fi}
	\myaff{Aalto University, Gran Sasso Science Institute, and  Institut de Recherche en Informatique Fondamentale}

	\textbf{Juho Hirvonen}
	\myemail{juho.hirvonen@cs.uni-freiburg.de}
	\myaff{University of Freiburg, IRIF, CNRS, and University Paris Diderot}

	\textbf{Janne H.\ Korhonen}
	\myemail{janne.h.korhonen@aalto.fi}
	\myaff{Aalto University}

	\textbf{Tuomo Lempi\"ainen}
	\myemail{tuomo.lempiainen@aalto.fi}
	\myaff{Aalto University}

	\textbf{Dennis Olivetti}
	\myemail{dennis.olivetti@aalto.fi}
	\myaff{Aalto University, Gran Sasso Science Institute, and  Institut de Recherche en Informatique Fondamentale}

	\textbf{Jukka Suomela}
	\myemail{jukka.suomela@aalto.fi}
	\myaff{Aalto University}

\end{mycover}

\medskip
\begin{myabstract}
	\noindent\textbf{Abstract.}
	A number of recent papers -- e.g.\ Brandt et al.\ (STOC 2016), Chang et al.\ (FOCS 2016), Ghaffari \& Su (SODA 2017), Brandt et al.\ (PODC 2017), and Chang \& Pettie (FOCS 2017) -- have advanced our understanding of one of the most fundamental questions in theory of distributed computing: what are the possible time complexity classes of $\lcl$ problems in the $\local$ model? In essence, we have a graph problem $\Pi$ in which a solution can be \emph{verified} by checking all radius-$O(1)$ neighbourhoods, and the question is what is the smallest $T$ such that a solution can be \emph{computed} so that each node chooses its own output based on its radius-$T$ neighbourhood. Here $T$ is the distributed time complexity of $\Pi$.

	The time complexity classes for deterministic algorithms in bounded-degree graphs that are known to exist by prior work are $\Theta(1)$, $\Theta(\log^* n)$, $\Theta(\log n)$, $\Theta(n^{1/k})$, and $\Theta(n)$. It is also known that there are two gaps: one between $\omega(1)$ and $o(\log \log^* n)$, and another between $\omega(\log^* n)$ and $o(\log n)$. It has been conjectured that many more gaps exist, and that the overall time hierarchy is relatively simple -- indeed, this is known to be the case in restricted graph families such as cycles and grids.

	We show that the picture is much more diverse than previously expected. We present a general technique for engineering $\lcl$ problems with numerous different deterministic time complexities, including $\Theta( \log^{\alpha} n )$ for any $\alpha \ge 1$, $2^{\Theta( \log^{\alpha} n )}$ for any $\alpha \le 1$, and $\Theta(n^{\alpha})$ for any $\alpha < 1/2$ in the high end of the complexity spectrum, and $\Theta( \log^{\alpha} \log^* n )$ for any $\alpha \ge 1$, $\smash{2^{\Theta( \log^{\alpha} \log^* n )}}$ for any $\alpha \le 1$, and $\Theta((\log^* n)^{\alpha})$ for any $\alpha \le 1$ in the low end of the complexity spectrum; here $\alpha$ is a positive rational number.
\end{myabstract}

\thispagestyle{empty}
\setcounter{page}{0}
\newpage

\section{Introduction}\label{sec:introduction}

In this work, we show that the landscape of distributed time complexity is much more diverse than what was previously known. We present a general technique for constructing distributed graph problems with a wide range of different time complexities. In particular, our work answers many of the open questions of Chang and Pettie~\cite{chang17hierarchy}, and disproves one of their conjectures.

\subsection{\texorpdfstring{\boldmath $\local$}{LOCAL} model}

We explore here one of the standard models of distributed computing, the $\local$ model~\cite{peleg00distributed,linial92locality}. In this model, we say that a graph problem (e.g., graph colouring) is solvable in time $T$ if each node can output its own part of the solution (e.g., its own colour) based on its radius-$T$ neighbourhood. We focus on deterministic algorithms -- even though most of our results have direct randomised counterparts -- and as usual, we assume that each node is labelled with an $O(\log n)$-bit unique identifier. We give the precise definitions in Section~\ref{sec:prelim}.

\subsection{\texorpdfstring{\boldmath $\lcl$}{LCL} problems}

The most important family of graph problems from the perspective of the $\local$ model is the class of $\lcl$ problems~\cite{naor95what}. Informally, $\lcl$ problems are graph problems that can be solved in constant time with a \emph{nondeterministic} algorithm in the $\local$ model, and the key research question is, what is the time complexity of solving $\lcl$ problems with \emph{deterministic} algorithms. Examples of $\lcl$ problems include the problem of finding a proper vertex colouring with $k$ colours: if you guess a solution nondeterministically, you can easily verify it with a constant-time distributed algorithm by having each node check its constant-radius neighbourhood. As usual, we will focus on bounded-degree graphs. We give the precise definitions in Section~\ref{sec:prelim}.

\subsection{State of the art}\label{ssec:prior}

Already in the 1990s, it was known that there are $\lcl$ problems with time complexities $O(1)$, $\Theta(\log^* n)$, and $\Theta(n)$ on $n$-node graphs~\cite{linial92locality,cole86deterministic}. It is also known that these are the only possibilities in the case of cycles and paths~\cite{naor95what}. For example, the problem of finding a $2$-colouring of a path is inherently global, requiring time $\Theta(n)$, while the problem of finding a $3$-colouring of a path can be solved in time $\Theta(\log^* n)$.

While some cases (e.g., oriented grids) are now completely understood~\cite{grid-lcl}, the case of general graphs is currently under active investigation. $\lcl$ problems with deterministic time complexities of $\Theta(\log n)$~\cite{brandt16lll,chang16exponential,ghaffari17distributed} and $\Theta(n^{1/k})$ for all $k$~\cite{chang17hierarchy} have been identified only very recently. It was shown by Chang et al.\ that there are no $\lcl{}$ problems with complexities between $\omega(\log^* n)$ and $o(\log n)$~\cite{chang16exponential}. Classical symmetry breaking problems like maximal matching, maximal independent set, $(\Delta+1)$-colouring, and $(2\Delta-1)$-edge colouring have complexity $\Theta(\log^* n)$~\cite{barenboim16sublinear,barenboim14distributed,panconesi01simple,fraigniaud16local}. Some classical problems are now also known to have intermediate complexities, even though tight bounds are still missing: $\Delta$-colouring and $(2\Delta-2)$-edge colouring require $\Omega(\log n)$ rounds~\cite{brandt16lll,chang18complexity}, and can be solved in time $O(\mypolylog n)$~\cite{panconesi95delta}. Some gaps have been conjectured; for example, Chang and Pettie~\cite{chang17hierarchy} conjecture that there are no problems with complexity between $\omega(n^{1/(k+1)})$ and $o(n^{1/k})$. See Table~\ref{tab:soa} for an overview of the state of the art.

\begin{table}
	\centering
	\begin{tabular}{@{}lll@{}}
		\toprule
		Complexity & Status & Reference \\
		\midrule
		$O(1)$ & exists & trivial \\
		$\omega(1)$, $o(\log \log^* n)$ & does not exist & \cite{naor95what} \\
		$\Omega(\log \log^* n)$, $o(\log^* n)$ & ? & \\
		$\Theta(\log^* n)$ & exists & \cite{linial92locality,cole86deterministic} \\
		$\omega(\log^* n)$, $o(\log n)$ & does not exist & \cite{chang16exponential} \\
		$\Theta(\log n)$ & exists & \cite{brandt16lll,chang16exponential,ghaffari17distributed} \\
		$\omega(\log n)$, $n^{o(1)}$ & ? & \\
		$\Theta(n^{1/k})$ & exists & \cite{chang17hierarchy} \\
		$\Theta(n)$ & exists & trivial \\
		\bottomrule
	\end{tabular}
	\caption{State of the art: prior results on the existence of $\lcl$ problems in different complexity classes.}\label{tab:soa}
\end{table}

The picture changes for randomised algorithms, especially in the region between $\omega(\log^* n)$ and $o(\log n)$. In the lower end of this region, it is known that there are no \lcl{}s with randomised complexity between $\omega(\log^* n)$ and $o(\log \log n)$~\cite{chang16exponential}, but e.g.\ sinkless orientations have a randomised complexity $\Theta(\log \log n)$~\cite{brandt16lll,ghaffari17distributed}, and it is known that no \lcl{} problem belongs to this complexity class in the deterministic world. In the higher end of this region, it is known that all \lcl{}s solvable in time $o(\log n)$ can be solved in time $T_{\LLL}(n)$, the time it takes to solve a relaxed variant of the Lov\'asz local lemma~\cite{chang17hierarchy}. The current best algorithm gives $T_{\LLL}(n) = 2^{O(\sqrt{\log \log n})}$~\cite{fischer17sublogarithmic}.

So far we have discussed the complexity of \lcl{} problems in the strict classical sense, in graphs of maximum degree $\Delta = O(1)$. Many of these problems have been studied also in the case of a general~$\Delta$. The best deterministic algorithms for maximal independent set and $(\Delta+1)$-colouring run in time $2^{O(\sqrt{\log n})}$~\cite{panconesi96decomposition}. Maximal matching can be solved in time $O(\log^2 \Delta \log n)$~\cite{fischer17improved} and $(2\Delta-1)$-edge colouring in time $O(\log^8 \Delta \log n)$~\cite{fischer17deterministic}. Corresponding randomised solutions are exponentially faster: $O(\log \Delta) + 2^{O(\sqrt{\log \log n})}$ rounds for maximal independent set~\cite{ghaffari16improved}, $O(\sqrt{\log \Delta}) + 2^{O(\sqrt{\log \log n})}$ for $(\Delta+1)$-colouring~\cite{harris16sublogarithmic}, $O(\log \Delta + \log^3 \log n)$ for maximal matching~\cite{fischer17improved}, and $2^{O(\sqrt{\log \log n})}$ for $(2\Delta-1)$-edge colouring~\cite{fischer17deterministic,elkin15edgecoloring}. Some lower bounds are known in the general case: maximal independent set and maximal matching require time $\Omega(\min \{ \log \Delta / \log \log \Delta, \sqrt{\log n / \log \log n} \})$~\cite{kuhn16local}.

\subsection{Contributions}

Based on the known results related to $\lcl$ problems, it seemed reasonable to conjecture that there might be only three distinct non-empty time complexity classes below $n^{o(1)}$, namely $O(1)$, $\Theta(\log^* n)$, and $\Theta(\log n)$. There are very few candidates of $\lcl$ problems that might have any other time complexity, and in particular the gap between $\Omega(\log \log^* n)$ and $o(\log^* n)$ seemed to be merely an artefact of the current Ramsey-based proof techniques (see Chang and Pettie~\cite{chang17hierarchy} for more detailed discussion on this region).

Our work changes the picture completely: we show how to construct \emph{infinitely many} $\lcl$ problems for the regions in which the existence of any problems was an open question. We present a \emph{general technique} that enables us to produce time complexities of the form $f(\log^* n)$ and $f(n)$ for a wide range of functions $f$, as long as $f$ is sublinear and at least logarithmic. See Table~\ref{tab:contrib} for some examples of time complexities that we can construct with our technique.

\begin{table}[b!]
	\newcommand{\new}{this work}
	\newcommand{\gap}{$\mspace{5mu}\bigg|$ \emph{gap}}
	\centering
	\begin{tabular}{@{}lll@{\qquad\qquad}lll@{}}
		\toprule
		``Low'' & & & ``High'' \\
		\midrule
		$O(1)$ & & trivial & $\Theta(\log^* n)$ & & \cite{linial92locality,cole86deterministic} \\[3pt]
		\gap & & \cite{naor95what,chang17hierarchy} & \gap & & \cite{chang16exponential} \\[11pt]
		$\Theta( \log \log^* n )$ & & \new & $\Theta(\log n)$ & & \cite{brandt16lll,chang16exponential,ghaffari17distributed}  \\[2pt]
		$\Theta( \log^{r/s} \log^* n )$& $r/s \ge 1$ & \new & $\Theta( \log^{r/s} n )$& $r/s \ge 1$ & \new \\[2pt]
		$\smash{2^{\Theta( \log^{r/s} \log^* n )}}$& $r/s \le 1$ & \new & $2^{\Theta( \log^{r/s} n )}$& $r/s \le 1$ &\new \\[2pt]
		& & & $\Theta(n^{1/s})$ & & \cite{chang17hierarchy} \\[2pt]
		$\Theta((\log^* n)^{r/s})$ & $r/s \le 1$ & \new & $\Theta(n^{r/s})$ & $r/s < 1/2$ & \new \\[2pt]
		$\Theta(\log^* n)$ & & \cite{linial92locality,cole86deterministic} & $\Theta(n)$ & & trivial \\
		\bottomrule
	\end{tabular}
	\caption{Our contributions: examples of time complexity classes that are now known to contain an \lcl{} problem. The integers $r$ and $s$ are positive constants.}\label{tab:contrib}
\end{table}

The table also highlights another surprise: the structure of ``low'' complexities below $O(\log^* n)$ and the structure of ``high'' complexities above $\omega(\log^* n)$ look now very similar.

\subsection{Proof ideas}

On a high level, we start by defining a simple model of computation, called a \emph{link machine} here. We emphasise that link machines are completely unrelated to distributed computing; they are simply a specific variant of the classical register machines. A link machine has $O(1)$ registers that can hold unbounded positive natural numbers, and a finite \emph{program} (sequence of instructions). The machine supports the following instructions: resetting a register to $1$, addition of two registers, comparing two registers for equality, and skipping operations based on the result of a comparison.

We say that a link machine $M$ has \emph{growth} $g\colon \N \to \N$ if the following holds: if we reset all registers to value $1$, and then run the program of the machine $M$ repeatedly for $\ell$ times, then the \emph{maximum} of the register values is $g(\ell)$. For example, the following link machine has a growth $g(\ell) = \Theta(\ell^2)$:
\begin{program}
	\State{$x \gets x + 1$}
	\State{$y \gets y + x$}
\end{program}
Now assume that we have the following ingredients:
\begin{enumerate}[noitemsep]
	\item A link machine $M$ of growth $g$.
	\item An $\lcl$ problem $\Pi$ for directed cycles, with a time complexity $T$.
\end{enumerate}
We show how to construct a new $\lcl$ problem $\Pi_M$ in which the ``relevant'' instances are graphs $G$ with the following structure:
\begin{itemize}[noitemsep]
	\item There is a directed cycle $C$ in which we need to solve the original problem $\Pi$.
	\item The cycle is augmented with an additional structure of multiple layers of ``shortcuts'', and the lengths of the shortcuts correspond to the values of the registers of machine $M$.
\end{itemize}
Therefore if we take $\ell$ steps away from cycle $C$, we will find shortcuts of length $g(\ell)$. In particular, if two nodes $u$ and $v$ are within distance $\ell \cdot g(\ell)$ from each other along cycle $C$, we can reach from $u$ to $v$ in $\Theta(\ell)$ steps along graph $G$.

In essence, we have compressed the distances and made problem $\Pi_M$ easier to solve than $\Pi$, in a manner that is controlled precisely by function $g$. For example, if $g(\ell) = \Theta(\ell^2)$, then distance $\ell \cdot g(\ell) = \Theta(\ell^3)$ along $C$ corresponds to distance $\Theta(\ell)$ in graph $G$. If $\Pi$ had a time complexity of $T = \Theta(\log^* n)$, we obtained a problem $\Pi_M$ with a time complexity of $T_M = \Theta((\log^* n)^{1/3})$.

Notice that these results could not be achieved by just adding shortcuts of length $g(n)$ directly onto every node of the cycle, since the lengths of the shortcuts would not be locally checkable, and moreover, it would not be true that a node can reach every other node within a certain distance.

\subsection{Some technical details}

Plenty of care is needed in order to make sure that
\begin{itemize}
	\item $\Pi_M$ is indeed a well-defined $\lcl$ problem: feasible solutions can be verified by checking the $O(1)$-radius neighbourhoods,
	\item $\Pi_M$ is solvable in time $O(T_M)$ also in arbitrary bounded-degree graphs and not just in ``relevant'' instances that have the appropriate structure of a cycle plus shortcuts,
	\item there is no way to cheat and solve $\Pi_M$ in time $o(T_M)$.
\end{itemize}
There is a fine balance between these goals: to make sure $\Pi_M$ is solvable efficiently in adversarial instances, we want to modify the definition so that for unexpected inputs it is permitted to produce the output ``this is an invalid instance'', but this easily opens loopholes for cheating -- what if all nodes always claim that the input is invalid?

We address these issues with the help of ideas from \emph{locally checkable proofs} and \emph{proof labelling schemes} \cite{goos16lcp,korman10proof}, both for inputs and for outputs:
\begin{itemize}
	\item Locally checkable inputs: Relevant instances carry a locally checkable proof. If the instance is not relevant, it can be detected locally.
	\item Locally checkable outputs: If the algorithm claims that the input is invalid, it also has to prove it. If the proof is wrong, it can be detected locally.
\end{itemize}
In essence, we define $\Pi_M$ so that the algorithm has two possibilities in all local neighbourhoods: solve $\Pi$ or prove that the input is invalid. This requirement can be now encoded as a bona fide $\lcl$ problem.

As a minor twist, we have slightly modified the above scheme so that we replace impossibility of cheating by hardness of cheating. Our $\lcl$ problem is designed so that an algorithm could, in principle, construct a convincing proof that claims that the input is invalid (at least for some valid inputs). However, to avoid detection, the algorithm would need to spend $\Omega(T_M)$ time to construct such a proof -- in which case the algorithm could equally well solve the original problem directly.

\subsection{Significance}

The complexity classes and the gaps in the time hierarchy of $\lcl$ problems have recently played a key role in the field of distributed computing. The classes have served as a source of inspiration for algorithm design (e.g.\ the line of research related to the sinkless orientation problem \cite{brandt16lll,chang16exponential,ghaffari17distributed} and the follow-up work \cite{ghaffari17degree-splitting} that places many other problems in the same complexity class), and the gaps have directly implied non-trivial algorithmic results (e.g.\ the problem of $4$-colouring $2$-dimensional grids \cite{grid-lcl}). The recently identified gaps \cite{grid-lcl,chang16exponential,chang17hierarchy,fischer17sublogarithmic} have looked very promising; it has seemed that a complete characterisation of the $\lcl$ complexities might be within a reach of the state of the art techniques, and the resulting hierarchy might be sparse and natural.

In essence, our work shows that the free lunch is over. The deterministic $\lcl$ complexities in general bounded-degree graphs do not seem to provide any further gaps that we could exploit. Any of the currently known upper bounds might be tight. To discover new gaps, we will need to restrict the setting further, e.g.\ by studying restricted graph families such as grids and trees \cite{grid-lcl,chang17hierarchy}, or by focusing on restricted families of $\lcl$ problems. Indeed, this is our main open question: what is the broadest family of $\lcl$ problems that contains the standard primitives (e.g., colourings and orientations) but for which there are large gaps in the distributed time hierarchy?

\section{Preliminaries}\label{sec:prelim}

Let us first fix some terminology. We work with directed graphs $G=(V,E)$, which are always assumed to be simple, finite, and connected. We denote the number of nodes by $n = |V|$. The number of edges on a shortest path from node~$v$ to node~$u$ is denoted $\dist(v,u)$. A \emph{labelling} of a graph~$G$ is a mapping $l\colon V \to \Sigma$. Given a labelled graph $(G,l)$, the \emph{radius-$T$ neighbourhood} of a node~$v$ consists of the subgraph~$G_{v,T}=(V_{v,T},E_{v,T})$, where $V_{v,T} = \set{u\in V}{\dist(v,u) \le T}$ and $E_{v,T} = \set{(u,w)\in E}{{\dist(v,u) \le T} \text{ and } {\dist(v,w) \le T}}$, as well as the restriction $l\restriction_{V_{v,T}}\colon V_{v,T} \to \Sigma$ of the labelling. The set of natural numbers is $\N = \set{0,1,\ldots}$.

\subsection{Model of computation}

Our setting takes place in the $\local$ model~\cite{peleg00distributed,linial92locality} of distributed computing. We have a graph~$G=(V,E)$, where each node $v \in V$ is a computational unit and all nodes run the same \emph{deterministic} algorithm~$\A$. We work with \emph{bounded-degree graphs}; hence $\A$ can depend on an upper bound~$\Delta$ for the maximum degree of $G$.

Initially, the nodes are not aware of the graph topology -- they can learn information about it by communicating with their neighbours. To break symmetry, nodes have access to $O(\log n)$-bit unique identifiers, given as a labelling. We will also assume that the nodes are given as input the number~$n$ of nodes (for most of our results, e.g.\ a polynomial upper bound on $n$ is sufficient). In addition, nodes can be given a task-specific local \emph{input labelling}. We will often refer to directed edges, but for our purposes the directions are just additional information that is encoded in the input labelling. We emphasise that the directions of the edges do not affect communication; they are just additional information that the nodes can use.

The communication takes place in synchronous communication rounds. In each round, each node $v \in V$
\begin{enumerate}[noitemsep]
  \item sends a message to each of its neighbours,
  \item receives a message from each of its neighbours,
  \item performs local computation based on the received messages.
\end{enumerate}
Each node~$v$ is required to eventually halt and produce its own local output. We do not limit the amount of local computation in each round, nor the size of messages; the only resource of interest is the number of communication rounds until all the nodes have halted.

Note that in $T$ rounds of communication, each node can gather all information in its radius-$T$ neighbourhood, and hence a $T$-round algorithm is simply a mapping from radius-$T$ neighbourhoods to local outputs.

\subsection{Graph problems}

In the framework of the $\local$ model, the same graph $G=(V,E)$ serves both as the communication graph and the problem instance. In addition to the graph topology, the problem instance can contain local input labels. To solve the graph problem, each node is required to produce an output label so that all the labels together define a valid output.

More formally, let $\Sigma$ and $\Gamma$ be sets of input and output labels, respectively. A \emph{graph problem} is a function $\Pi_{\Sigma,\Gamma}$ that maps each graph~$G$ and input labelling $i\colon V \to \Sigma$ to a set $\Pi_{\Sigma,\Gamma}(G,i)$ of valid solutions. Each solution is a function $o\colon V \to \Gamma$. We say that \emph{algorithm~$\A$ solves graph problem $\Pi_{\Sigma,\Gamma}$} if for each graph~$G$, each input labelling $i\colon V \to \Sigma$ of $G$, and any setting of the unique identifiers, the mapping~$o\colon V \to \Gamma$ defined by setting $o(v)$ to be the local output of node~$v$ for each $v \in V$, is in the set $\Pi_{\Sigma,\Gamma}(G,i)$. Note that the unique identifiers are given as a separate labelling; the set $\Pi_{\Sigma,\Gamma}(G,i)$ of valid solutions depends only on the task-specific input labelling~$i$. When $\Sigma$ and $\Gamma$ are clear from the context, we denote a graph problem simply~$\Pi$.

Let $T\colon \N \to \N$. Suppose that algorithm~$\A$ solves problem $\Pi$, and for each input graph~$G$, each input labelling~$i$ and any setting of the unique identifiers, each node needs as most $T(|V|)$ communication rounds to halt. Then we say that \emph{algorithm~$\A$ solves problem~$\Pi$ in time~$T$}, or that the \emph{time complexity} of $\A$ is $T$. The \emph{time complexity of problem $\Pi$} is defined to be the slowest-growing function~$T\colon \N \to \N$ such that there exists an algorithm~$\A$ solving $\Pi$ in time~$T$.

In this work, we consider an important subclass of graph problems, namely \emph{locally checkable labelling} ($\lcl$) problems~\cite{naor95what}. A graph problem $\Pi_{\Sigma,\Gamma}$ is an $\lcl$ problem if the following conditions hold:
\begin{enumerate}[noitemsep]
  \item The label sets $\Sigma$ and $\Gamma$ are finite.
  \item There exists a $\local$ algorithm~$\A$ with constant time complexity, such that given any labelling $l \colon V \to \Gamma$ as an additional input labelling, $\A$ can determine whether $l \in \Pi_{\Sigma,\Gamma}(G,i)$ holds: if $l \in \Pi_{\Sigma,\Gamma}(G,i)$, all nodes output ``yes''; otherwise at least one node outputs ``no''.
\end{enumerate}
That is, an $\lcl$ problem is one where the input and output labels are of constant size, and for which the validity of a candidate solution can be checked in constant time.

\section{Link machines}\label{sec:lm}

A \emph{link machine} $M$ consists of a constant number $k$ of \emph{registers}, labelled with arbitrary strings, and a \emph{program} $P$. The program is a sequence of instructions $i_1, i_2, \dotsc, i_p$, where each instruction $i_j$ is one of the following for some registers $a$, $b$, and $c$:
\begin{itemize}[noitemsep]
    \item Addition: $a \gets b + c$.
    \item Reset: $a \gets 1$.
    \item Conditional execution: If $a = b$ (or if $a \neq b$), execute the next $s$ instructions, otherwise skip them.
\end{itemize}
The registers can store unbounded natural numbers. For convenience, we will generally identify the link machine with its program.

An \emph{execution} of the link machine $M$ is a single run through the program, modifying the values of the registers according to the instructions in the obvious way. Generally, we consider computing with link machines in a setting where
\begin{itemize}[noitemsep]
    \item all registers start from value $1$, and
    \item we are interested in the maximum value over all registers after $\ell$ executions of $M$.
\end{itemize}
Specifically, for a register $r$, we denote by $r(\ell)$ the value of register $r$ after $\ell$ full executions of the link machine program, starting from all registers set to $1$. We say that a link machine $M$ with registers $r_1, r_2, \dotsc, r_k$ has \emph{growth} $g \colon \N \to \N$ if, starting from all registers set to $1$, we have that $\max \{ r_i(\ell) \colon i = 1,2,\dotsc, k \} = g(\ell)$ for all $\ell \in \N$. While $g$ does not need to be a bijection, we use the notation $g^{-1} \colon \N \to \N$ to denote the function defined by setting $g^{-1}(\ell) = \min\set{m\in\N}{g(m) \ge \ell}$ for all $\ell \in \N$.

\subsection{Working with link machine programs}

\paragraph{Composition.} Consider two link machines $M_1$ and $M_2$ with corresponding programs $P_1$ and $P_2$. By relabelling if necessary, we can assume that the programs do not share any registers. Moreover, assume $P_1$ has a register $y$ we call the \emph{output register} for $P_1$ and $P_2$ has a register $x$ we call the \emph{input register} for $P_2$. We define the \emph{composition} $P_2 \circ P_1$ as the program
    \begin{program}
        \State{$P_1$}
        \State{$x \gets y$}
        \State{$P_2$}
    \end{program}
Note that the growth of the program $P_2 \circ P_1$ at step $\ell$ is the maximum between the growth of each program at step $\ell$, and can be affected by the input given by $P_1$ to $P_2$. The basic idea is to use this construct so that $P_1$ produces an output register $y(\ell)$ dependent on $\ell$, which is then used by $P_2$ to produce a composed growth function $g(y(\ell))$.

We define the composition $P_i \circ \dotsb \circ P_2 \circ P_1$ of multiple link machine programs with specified input and output registers similarly.

Note that the growth of a link machine is at most $2^{O(\ell)}$. As we will see later this constraint is necessary, since otherwise we would contradict known results regarding gaps on \lcl{} complexities.

\subsection{Building blocks}\label{sec:bb}

We now define our basic \emph{building blocks}, that is, small programs that can be composed to obtain more complicated functions. These building blocks are summarised in Table~\ref{tab:bb}. In all our cases, we will assume that the value of the input register $x$ is growing and bounded above by $\ell$; otherwise the semantics of a building block is undefined.

	\begin{table}
		\centering
		\begin{tabular}{@{}llll@{}}
			\toprule
			Program $P$     & Input      &  Output                    & Growth                \\
			\midrule
			\MCOUNT         & --         &  $y = \ell$                &  $\ell$               \\
			$\MROOTK'_k$     & --         &  $y = \Theta(\ell^{1/k})$  &  $\Theta(\ell^{1/k})$ \\
            $\MROOTK_k$     & $x$        &  $y = \Theta(x^{1/k})$     &  $\Theta(x)$          \\
			$\MPOWK_k$      & $x$        &  $y = \Theta(x^k)$         &  $\Theta(x^k)$        \\
			\MEXP           & $x$        &  $y = 2^{\Theta(x)}$       &  $2^{\Theta(x)}$      \\
			\MLOG           & $x$        &  $y = \Theta(\log x)$      &  $\Theta(x)$          \\
			\bottomrule
		\end{tabular}
		\caption{Our basic building blocks. The integer $k$ is a constant. Programs with no input generate output values that only depend on the number of executions $\ell$. Programs with input assume that the value of the input register $x$ is growing and bounded above by $\ell$.}\label{tab:bb}
	\end{table}

\paragraph{Link machine programming conventions.} We use the following shorthands when writing link machine programs:
\begin{itemize}
	\item We write conditional executions as \textsc{if-then} constructs, with the conditional execution skipping all enclosed instructions if the test fails. We also use \textsc{if-else} constructs, as these can be implemented in an obvious way.
	\item We write sums with multiple summands, constant multiplications, and constant additions as single instructions, as these can be easily simulated by multiple instructions and a constant number of extra registers.
\end{itemize}

\paragraph{Counting.} Our first program \MCOUNT{} simply produces a linear output $y = \ell$:
\begin{program}
	\State{$y \gets y +1$}
\end{program}
Clearly, program \MCOUNT{} has growth $\ell$.

\paragraph{Polynomials.} Next, we define a sequence of programs for computing $y = \Theta(x^k)$. For any fixed $k \ge 1$, we define the program $\MPOWK_k$ as follows:
\begin{program}
	\If{ $x \neq x_1$ }
	\State{$x_k \gets \sum_{i=0}^k \binom{k}{i} x_i$}
	\State{$x_{k-1} \gets \sum_{i=0}^{k-1} \binom{k-1}{i} x_i$}
	\State{$\dotso$}
	\State{$x_1 \gets x_1 + 1$}
	\EndIf
	\State{$y \gets x_k$}
\end{program}
We now have that $x_1 = \Theta(x)$, and by the binomial theorem, $x_i = (x_1)^i$ for all $i = 1,2,\dotsc, k$. Moreover, $\MPOWK_k$ has growth $\Theta(x^{k})$.

\paragraph{Roots.} We define two versions of a program computing a $k$th root. The first one does not take an input and has the advantage that it has sublinear growth of $\Theta(\ell^{1/k})$. Specifically, we define $\MROOTK'_k$ as follows:
\begin{program}
	\If{ $y_1 \neq y_2$ }
	\State{$y_1 \gets y_1 + 1$}
	\ElsIf{ $y_2 \neq y_3$ }
	\State{$y_2 \gets y_2 + 1$}
	\State{$y_1 \gets 1$}
	\ElsIf{ $\dotsc$ }
	\State{$\dotsc$}
	\ElsIf{ $y_{k-1} \neq y_{k}$ }
	\State{$y_{k-1} \gets y_{k-1} + 1$}
	\State{$y_{1} \gets 1, y_{2} \gets 1,  \dotsc, y_{k-2} \gets 1$}
	\Else
	\State{$y_{k} \gets y_{k} + 1$}
	\State{$y_{1} \gets 1, y_{2} \gets 1,  \dotsc, y_{k-1} \gets 1$}
	\EndIf
	\State{$y \gets y_k$}
\end{program}
Observe that started from all registers set to $1$, we always have $y_1 \le y_2 \le \ldots \le y_k$. Moreover, for register $y_k$ to increase from $s$ to $s+1$, the values of the registers $y_i$ will visit all configurations where $y_1 \le y_2 \le \ldots y_{k-1} \le s$, and there are $\binom{s+k-2}{k-1}$ such configurations. This implies that the growth of register $y_k$ is $\Theta(\ell^{1/k})$.

The second version of the $k$th root program takes an input register $x$, and computes an output $y = \Theta(x^{1/k})$. We define this program $\MROOTK_k$ as follows:
\begin{program}
	\If{ $x \neq x'$ }
	\State{$x' \gets x' + 1$}
	\State{$\MROOTK'_k$}
	\EndIf
\end{program}
Clearly, we have that $x' = \Theta(x)$, and by the properties of $\MROOTK'_k$ the output register is $y = \Theta(x^{1/k})$. The growth of $\MROOTK_k$  is $g(\ell) = x$.

\paragraph{Exponentials.} The program $\MEXP$ computes an exponential function in the input register $x$:
\begin{program}
	\If{ $x \neq x'$ }
	\State{$y \gets y + y$}
	\State{$x' \gets x' + 1$}
	\EndIf
\end{program}
We have that $x' = \Theta(x)$, and $y = 2^{x'} = 2^{\Theta(x)}$. Moreover, the growth of $\MEXP$ is $2^{\Theta(x)}$.

\paragraph{Logarithms.} The program $\MLOG$ computes a logarithm of the input register $x$:
\begin{program}
	\If{ $x \neq x'$ }
	\If{ $x' = z$ }
	\State{ $z \gets z + z $}
	\State{ $y \gets y + 1 $}
	\EndIf
	\State{$x' \gets x' + 1$}
	\EndIf
\end{program}
Clearly, we have that $x' = \Theta(x)$, and $z = \Theta(x')$. Starting from the valid starting configuration, the register $z$ only takes values that are powers of two, and $y = \log_2 z$. Thus, we have $y = \Theta(\log x)$. By construction, the growth of $\MLOG$ is $g(\ell) = z = \Theta(x)$.

\subsection{Composed functions}

By composing our building block functions, we can now construct more complicated functions, which will then be used to obtain \lcl{} problems of various complexities. The constructions we use are listed in \tableref{tab:composition}; the values of output registers and the functions computed by these programs follow directly from the results in \sectionref{sec:bb}.

Notice that for all the considered programs there is a register that is always as big as all the other registers. Thus, we can refer to it as the register of maximum value.

\begin{table}
    \centering
    \begin{tabular}{@{}lll@{}}
        \toprule
        Program $P$                                                        &             & Growth                \\
        \midrule
        $\MPOWK_p \circ \MROOTK'_q$                                        &             &  $\Theta(\ell^{p/q})$             \\
        $\MEXP \circ \MPOWK_q \circ \MROOTK'_p$                            & $(p \ge q)$ &  $2^{\Theta(\ell^{q/p})}$        \\
        $\MEXP \circ \MPOWK_q \circ \MROOTK_p \circ \MLOG \circ \MCOUNT$   & $(p \ge q)$ &  $2^{\Theta(\log^{q/p} \ell)}$        \\
        \bottomrule
    \end{tabular}
    \caption{Composed programs. The integers $p$ and $q$ are constants.}\label{tab:composition}
\end{table}

\begin{remark}
While exploring the precise power of link machines is left as an open question, we point out that, in this paper, we do not list every possible complexity that one can achieve with link machines. Indeed, there are many more time complexities that can be realised; for example, one could define a building block that performs a multiplication, or add support for negative numbers and subtractions. 
\end{remark}

\section{Link machine encoding graphs}\label{sec:link-machines}

In this section, we show how to encode link machine executions as locally checkable graphs. Fix a link machine $M$ with $k$ registers and a program of length $p$ that has non-decreasing growth $g$ in $\omega(1)$ and $2^{O(n)}$, and let $h$ be an integer. The basic idea is that we encode the link machine computation of the value $g(h)$ as follows:
\begin{itemize}
    \item We start from an $h  \times n$ grid graph, where $n=3 g(h)$, that wraps around in the horizontal direction, as shown in \figureref{fig:grid} (here $3$ is the smallest constant that avoids parallel edges or self-loops). This allows us to `count' in two dimensions; one is used for time, and the other for the values of the registers of $M$. The grid is consistently oriented so that we can locally distinguish between the two dimensions, and all grid edges are labelled with either `up', `down', `left' or `right' to make this locally checkable.
    \item We add horizontal edges to the grid graph to encode the values of the registers. Specifically, at level $\ell$ of the graph, the horizontal edges encode the values the registers take during the $\ell$th execution of the link machine program, with edge labels specifying which register values the edges are encoding (see \figureref{fig:links}).
\end{itemize}
The labels should be thought as \lcl{} input labels; as we will see later, they will allow us to recognise valid link machine encoding graphs in the sense of locally checkable proofs. We will make this construction more formal below.

\subsection{Formal definition}\label{sec:link-machine-encoding}

Let $M$ be a link machine with growth $g$ as above. We formally define the \emph{link machine encoding graphs} for $M$ as graphs obtained from the construction we describe below.

\paragraph{Grid structure.} The construction starts with a 2-dimensional $h \times n$ grid graph,
where $n = 3 g(h)$. Let $(x,\ell)$ denote the node on the $\ell$th row and the $x$th column, where $x \in \{ 1,2,\dotsc, n \}$ and $\ell \in \{ 1, 2, \dotsc, h \}$. The grid wraps around along the horizontal axis, that is, we also add the edges $\bigl((n,\ell), (1,\ell)\bigr)$ for all $\ell$.

We add horizontal \emph{link edges} to the graph according to the state of the machine $M$. That is, we say that for a node $(x,\ell)$, a \emph{link edge of length $s$} is an edge $\bigl((x,\ell), ( x+s \bmod n, \ell) \bigr)$. Let $r(\ell,i)$ denote the value of the register $r$ after executing $\ell-1$ times the full program of $M$, and then executing the first $i$ instructions of $M$. For each $\ell = 1, 2, \dotsc, h$, register $r$, and $i = 0,1,2, \dotsc, p$, we add a link edge of length $r(\ell,i)$ to all nodes on level $\ell$ if it does not already exist.

\paragraph{Local labels.} In addition to the graph structure, we add \emph{constant-size} labels to the graph as follows. First, each node $(x,\ell)$ has a set of labels for each incident edge, added according to the following rules if the corresponding edge is present (note that a single edge may have multiple labels):
\begin{itemize}[noitemsep]
    \item The grid edge to $(x,\ell+1)$ is labelled with $\dirup$.
    \item The grid edge from $(x,\ell-1)$ is labelled with $\dirdown$.
    \item The grid edge from $(x-1 \bmod n,\ell)$ is labelled with $\dirleft$.
    \item The grid edge to $(x+1 \bmod n,\ell)$ is labelled with $\dirright$.
    \item For each register $r$ and $i = 0,1,2,\dotsc,p$, the link edge of length $r(\ell,i)$ is labelled with $(r,i)$.
\end{itemize}
Consider the set of labels that each node~$(x,\ell)$ associates to each of its incident edges. When we later define graph problems, we assume these labels to be implicitly encoded in the node label given to $(x,\ell)$.

\begin{figure}
\centering
\includegraphics[page=1]{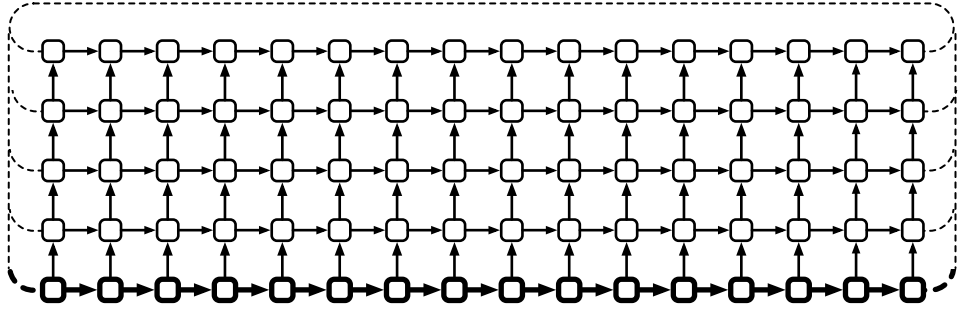}
\caption{The base grid. The bottom cycle (first level) is highlighted.}\label{fig:grid}
\end{figure}

\paragraph{Input.}
Also, each node $v$ is provided with an input $i(v)  \in \{0,1\}$.

\subsection{Local checkability}\label{subsec:local-checkability}
We show that the labels described in Section~\ref{sec:link-machine-encoding} constitute a \emph{locally checkable proof} for the graph being a link machine encoding graph. That is, there is a \local{} algorithm where all nodes accept a labelled graph if and only if it is a link machine encoding graph.

\paragraph{Local constraints.} We first specify a set of local constraints that are checked by the nodes locally. All the constraints depend on the radius-$4$ neighbourhood of the nodes, so this can be implemented in the \local{} model in $4$ rounds. In the following, for labels $L_1, L_2, \dotsc, L_k$, let $v(L_1,L_2,\dotsc,L_k)$ denote the node reached by following the edges with the specified labels.
The full constraints are now as follows:

\begin{enumerate}
	\item\label{item:link-first} Each node checks that the basic properties of the labelling are correct:
	\begin{itemize}
		\item All possible edge labels are present exactly once, except possibly one of $\dirdown$ and $\dirup$.
		\item The direction labels $\dirup$, $\dirdown$, $\dirleft$, and $\dirright$ are on different edges if present.
	\end{itemize}
	\item Grid constraints ensure the validity of the grid structure:
	\begin{itemize}
		\item Each node checks that each of the edges labelled with $\dirup$, $\dirdown$, $\dirleft$ or $\dirright$ has the opposite label in the other end.
		\item If there is an edge labelled \dirdown, check that $v(\dirdown,\dirright,\dirup) = v(\dirright)$.
		\item If there is not an edge labelled \dirdown, check that also nodes $v(\dirleft)$ and $v(\dirright)$ do not have edges labelled \dirdown.
		\item If there is not an edge labelled \dirup, check that also nodes $v(\dirleft)$ and $v(\dirright)$ do not have edges labelled \dirup.
	\end{itemize}
	\item Nodes check that the values of the registers are correctly initialised on the link edges:
	\begin{itemize}
		\item Nodes that do not have an edge labelled with $\dirdown$ check that the register values are initialised to $1$, that is, the labels $\dirright$ and $(r,0)$ are on the same edge for all registers $r$.
		\item Nodes that have an edge labelled with $\dirdown$ check that the registers are copied correctly, that is, $v((r,0)) = v(\dirdown,(r,p),\dirup)$ for all registers $r$.
	\end{itemize}
	\item Nodes check that the program execution is encoded correctly as follows. Each instruction is processed in order, from $1$ to $p$. The $i$th is checked as follows, depending on the type of the instruction:
	\begin{itemize}
		\item If the instruction is $a \gets 1$:
		\begin{enumerate}
			\item Register $a$ is correctly set to $1$: the labels $\dirright$ and $(a,i)$ are on the same edge.
			\item Any of the other registers did not change, that is, labels $(r,i-1)$ and $(r,i)$ are on the same edge for all registers $r$ except $a$.
		\end{enumerate}
		\item If the instruction is $a \gets b + c$:
		\begin{enumerate}
			\item Register $a$ is set correctly: $v\bigl((a,i)\bigr) = v\bigl((b,i-1), (c,i-1)\bigr)$.
			\item Any of the other registers did not change, that is, labels $(r,i-1)$ and $(r,i)$ are on the same edge for all registers $r$ except $a$.
		\end{enumerate}
		\item If the instruction is an \textsc{if} statement comparing registers $a$ and $b$, check if the labels $(a,i-1)$ and $(b,i-1)$ are on the same edge, and if this does not match the condition of the \textsc{if} statement, check that the following $s$ instructions are not executed:
		\begin{enumerate}
			\item Any registers do not change for $s$ steps, that is, for all registers $r$, we have that labels $(r,i-1), (r,i), (r,i+1), \dotsc, (r,i+s)$ are on the same edge.
			\item Skip the checks for the next $s$ instructions.
		\end{enumerate}
	\end{itemize}
	\item\label{item:link-last} If no edges are labelled \dirup, check that the link edges corresponding to the register with the maximum value form $3$-cycles.
\end{enumerate}

\begin{figure}
\centering
\includegraphics[page=2]{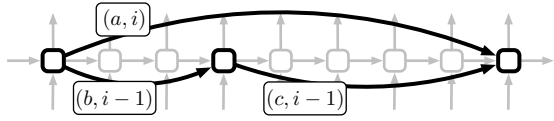}
\caption{Checking the correct encoding of the execution of the instruction $a \gets b + c$.}\label{fig:links}
\end{figure}

\paragraph{Correctness.} It is clear that link machine encoding graphs satisfy the constraints \ref{item:link-first}--\ref{item:link-last} specified above. Conversely, we want to show any graph satisfying these constraints is a link machine encoding graph, but it turns out this is not exactly the case.

It might happen that the register values exceed the width $w$ of the grid, the edges ``wrap around'', and the correspondence between the edge lengths and register values gets lost. However, for this to happen one has to have a row $\ell$ with $g(\ell) \ge w$.

In order to characterise the graph family captured by the local constraints, we define that a graph $G$ is an \emph{extended} link machine encoding graph if
\begin{itemize}[noitemsep]
	\item $G$ is an $h \times w$ grid for some $h$ and $w$ that wraps around horizontally but not vertically,
	\item $G$ satisfies the \emph{local} constraints of link machine encoding graphs, and
	\item there is an $\ell \le h$ with $g(\ell) \ge w/3$ such that up to row $\ell-1$ the edge lengths are in one-to-one correspondence with the register values of the first $\ell-1$ executions of link machine~$M$.
\end{itemize}
Note that a link machine encoding graph is trivially an extended link machine encoding graph, as we can simply choose $\ell = h$ and $w = n$ and hence $g(\ell) = w/3$. The intuition is that extended link machine encoding graphs have a good bottom part with dimensions $\ell \times \Theta(g(\ell))$, and on top of that there might be any number of additional rows of some arbitrary garbage.

\begin{lemma}\label{lem:extendedmachines}
	Let $G$ be a graph that satisfies the constraints \ref{item:link-first}--\ref{item:link-last} and has at least one node that does not have an edge labelled with $\dirdown$. Then $G$ is an extended link machine encoding graph.
\end{lemma}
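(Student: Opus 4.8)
The plan is to show that the local constraints force the grid to be genuinely rectangular (with no vertical wrap-around), and then to reconstruct the register values row by row, tracking exactly where the correspondence between edge lengths and register values can first break down. First I would use the existence of a node with no $\dirdown$ edge, together with constraint~\ref{item:link-first} and the grid constraints in item~2, to establish the global grid structure. The key observation is that the constraint ``if there is no $\dirdown$ edge, then $v(\dirleft)$ and $v(\dirright)$ also have no $\dirdown$ edge'' propagates along the horizontal direction: starting from the given $\dirdown$-free node and walking left/right, every node on that horizontal layer is $\dirdown$-free, and since $\dirleft$/$\dirright$ edges form a consistent path that must close up (the graph is finite and each node has exactly one $\dirleft$ and one $\dirright$ edge on distinct edges), this layer is a single cycle $C$. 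Symmetrically, the $\dirup$-propagation constraint gives a top layer. Then, using the $\dirup$/$\dirdown$ opposite-label checks and the commutation check $v(\dirdown,\dirright,\dirup) = v(\dirright)$, I would argue inductively that the whole graph is an $h\times w$ grid wrapping horizontally only, where $w = |C|$ and $h$ is the number of layers; this is the first bullet in the definition of an extended link machine encoding graph. Here I have to be careful that $w$ is exactly the cycle length and that the horizontal wrap is consistent at every level — this is where the ``all edge labels present exactly once'' part of item~\ref{item:link-first} does the work.

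Next I would analyse the bottom layer (the $\dirdown$-free cycle $C$) and the link edges on it. Constraint~3 forces, on each bottom-layer node, the edge labelled $(r,0)$ to coincide with the $\dirright$ edge for every register $r$, i.e.\ every register's initial link edge has length $1$. Constraint~\ref{item:link-last} forces the link edges for the maximum-value register to form $3$-cycles, so $w/3$ equals the value of the max register at the end of the bottom row's program execution. Then I would process the $p$ instructions in order using constraint~4: each instruction type (reset, addition, conditional skip) is checked in a way that mirrors exactly one execution step of $M$, so by induction on $i$ the link edge labelled $(r,i)$ on the bottom layer has length $r(1,i)$ — \emph{as long as none of these lengths has reached or exceeded $w$}, since only then is ``edge of length $s$'' unambiguous. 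Since all these values are bounded by $g(1) \le w/3 < w$, the bottom row is faithfully encoded. The one subtlety is the addition check $v((a,i)) = v((b,i-1),(c,i-1))$: I need the composition of two link edges of lengths $s_1,s_2$ to land on the node at offset $s_1+s_2$, which is fine provided $s_1,s_2 < w$, again guaranteed by the bound.

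Then I would push this up the grid by induction on the row index $\ell$. Given that rows $1,\dots,\ell-1$ faithfully encode the first $\ell-1$ executions of $M$, constraint~3's copy rule $v((r,0)) = v(\dirdown,(r,p),\dirup)$ transfers the register values from the end of row $\ell-1$ to the start of row $\ell$, and constraint~4 again simulates the $p$ instructions on row $\ell$. This induction continues faithfully as long as the register values on row $\ell$ stay below $w$; let $\ell$ be the first row where some register value reaches $w$ — equivalently, by monotonicity of $g$, the first row with $g(\ell) \ge w/3$ (and if no such row exists, take $\ell = h$, whence actually $w = n$ and $G$ is an honest link machine encoding graph). Below row $\ell$ the correspondence is exact, which is precisely the third bullet of the extended-encoding definition; above it we make no claims, consistent with the ``arbitrary garbage on top'' picture. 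The main obstacle I expect is the bookkeeping around this wrap-around threshold: I must verify that the local constraints, which can only ever ``see'' edges of bounded-in-the-abstract but unbounded-in-principle length, genuinely cannot distinguish a length-$s$ edge from a length-$(s \bmod w)$ edge, and hence that once wrap-around happens the constraints become vacuous rather than contradictory — so that such graphs are admitted by the lemma rather than excluded, matching the definition of \emph{extended} encoding exactly.
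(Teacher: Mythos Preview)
Your overall skeleton---grid structure from constraints~(1)--(2), then row-by-row induction on the link edges via constraints~(3)--(4), stopping at the first row where wrap-around can occur---matches the paper's argument. But you have misread constraint~\ref{item:link-last}: it is checked only by nodes with no $\dirup$ edge, i.e.\ by the \emph{top} row, not by the $\dirdown$-free bottom row. So your sentence ``Constraint~\ref{item:link-last} forces the link edges for the maximum-value register to form $3$-cycles, so $w/3$ equals the value of the max register at the end of the bottom row's program execution'' is simply false, and with it goes your bound $g(1)\le w/3$ (which is in any case not something the local constraints can enforce).

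This matters because constraint~\ref{item:link-last} is precisely what guarantees the third bullet of the definition. Applied correctly, at the top row the three max-register link edges must close up around the width-$w$ cycle, so their (mod-$w$) length times $3$ is a positive multiple of $w$; hence the genuine max register value at that level is at least $w/3$. This forces the existence of some $\ell\le h$ with $g(\ell)\ge w/3$. Your fallback ``if no such row exists, take $\ell=h$, whence actually $w=n$ and $G$ is an honest link machine encoding graph'' does not work: if $g(\ell)<w/3$ for every $\ell\le h$ then in particular $3g(h)<w$, so $G$ is neither a link machine encoding graph (which requires $w=3g(h)$) nor an extended one (which requires some $\ell$ with $g(\ell)\ge w/3$). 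You need constraint~\ref{item:link-last} at the top to rule this case out; once you move that step to its correct location, the rest of your induction is fine and coincides with the paper's proof.
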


\begin{proof}
	By constraints~(1) and (2), we have that the graph $G$ is a grid graph that wraps around horizontally. By the assumption that there is a node without edge labelled by $\dirdown$ and by constraint~(2), the grid cannot wrap around vertically. Hence $G$ is an $h \times w$ grid that wraps around horizontally, for some values $h$ and $w$, and by assumption it satisfies the local constraints of link machine encoding graphs.

	Constraints~(3) and (4) ensure that the link edges and the corresponding labels are according to the link machine encoding graph specification, as long as $g(\ell) \le w$.

	Constraint~(5) ensures that we cannot have $g(\ell) < w/3$ for all $\ell$, as in the top row we must have three edges that form a cycle that wraps around the entire grid of width $w$ at least once. Hence at some point we must reach $g(\ell) \ge w/3$, and this is sufficient for $G$ to satisfy the definition of an extended link machine encoding graph.
\end{proof}

\section{\boldmath \texorpdfstring{\lcl{}}{LCL} constructions}

Let $M$ be a link machine with non-decreasing growth $g$ in $\omega(1)$ and $2^{O(n)}$, and let $\Pi$ be an \lcl{} problem on directed cycles with complexity $T(n)$ -- for concreteness, $\Pi$ will either be $3$-colouring (complexity $\Theta(\log^* n)$) or a variant of $2$-colouring (complexity $\Theta(n)$). To simplify the construction, we will assume that $\Pi$ is solvable on directed cycles with \emph{one-sided} algorithms, i.e., with algorithms in which each node only looks at its $T(n)$ successors.
We now construct an \lcl{} problem $\Pi_M$ with complexity related to $g$, as outlined in the introduction:
\begin{itemize}
    \item If a node sees a graph that locally looks like a link machine encoding graph for $M$, and the node is on the bottom row of the grid, it will need to solve problem $\Pi$ on the directed cycle formed by the bottom row of the grid. As will be shown later, in $\Theta(\ell)$ steps, a node on the bottom row of the grid sees all nodes within distance $f(\ell) = \ell g(\ell)$ on the bottom cycle, so this is solvable in $\Theta(f^{-1}(T(n)))$ rounds.
    \item If a node sees something that does not look like a link machine encoding graph, it is allowed to report an error; the node must also provide an \emph{error pointer} towards an error it sees in the graph. A key technical point here is to ensure that it is not too easy to claim that there is an error somewhere far, even if in reality we have a valid link machine encoding graph. 
    We address this by ensuring that error pointer chains can only go right and then up, they cannot disappear without meeting an error, the part that is pointing right must be properly $2$-coloured, and the part that is pointing up copies the input $i(v)$ given to the node $v$ that is witnessing the error. If some nodes claim that the error is somewhere far up, we will eventually reach the highest layer of the graph and catch the cheaters. Also, nodes cannot blindly point up, because they need to mark themselves with the input of the witness. If all bottom-level nodes claim that the error is somewhere right, we do not necessarily catch cheaters, but the nodes did not gain anything as they had to produce a proper $2$-colouring for the long chain of error pointers.
\end{itemize}
There are some subtleties in both of these points, which we address in detail below.

\subsection{\boldmath The \texorpdfstring{$\lcl$}{LCL} problem \texorpdfstring{$\Pi_M$}{Pi\_M}}

Formally, we specify the \lcl{} problem $\Pi_M$ as follows. The input label set for $\Pi_M$ is the set of labels used in the link machine encoding graph labelling for $M$ as described in \sectionref{sec:link-machines}. The possible output labels are the following:
\begin{enumerate}[noitemsep]
    \item output labels of the \lcl{} problem $\Pi$,
    \item an \emph{error label} $\error$,
    \item an \emph{error pointer}, pointing either right ($\dirright$) or up ($\dirup$), with a counter mod $2$ and a label $c_e$ in $\{0,1\}$,
    \item an empty output $\epsilon$.
\end{enumerate}
The correctness of the output labelling is defined as follows.
\begin{enumerate}
    \item If the input labelling does not locally satisfy the constraints of link machine encoding graphs for $M$ (see Section~\ref{subsec:local-checkability}) either at the node itself or at one of its neighbours, then the only valid output is \error{}. Otherwise, the node must produce one of the other labels.
    \item If the output of a node $v$ is one of the labels of $\Pi$, then the following must hold:
    \begin{itemize}
        \item Node $v$ does not have an incident edge with label $\dirdown$ in the input labelling.
        \item If any adjacent nodes have output from the label set of $\Pi$, then the local constraints of $\Pi$ must be satisfied.
    \end{itemize}
    \item If the output of a node $v$ is empty, then it must have an incident edge with label $\dirdown$ in the input labelling.
    \item If the output of a node $v$ is an error pointer, then the following must hold:
    \begin{itemize}
        \item Node $v$ has only one outgoing error pointer.
        \item The error pointer is pointing either $\dirright$ or $\dirup$ if the node does not have an edge labelled $\dirdown$ in the input, and $\dirup$ if the node does have an edge labelled $\dirdown$ in the input.
        \item The node at the other end of the pointer either outputs an error label $\error$, or an error pointer.
        \item The $\bmod$-$2$ counters of nodes outputting error pointer $\dirright$ form a 2-colouring in the induced subgraph of those nodes.
        \item The nodes~$v$ outputting error pointer $\dirup$ have the same label $c_e(v)$ as the next node in the chain. If $v$ is the last node in the chain, $c_e(v) = i(w)$ holds, where $w$ is the witness outputting $\error$.
    \end{itemize}
\end{enumerate}
These conditions are clearly locally checkable, so $\Pi_M$ is a valid \lcl{} problem.

\subsection{Time complexity}

We now prove the following bounds for the time complexity of problem $\Pi_M$; recall that $f(k) = k g(k)$, where $g$ is the growth of link machine $M$. In the following, $n$ denotes the number of nodes in the input graph, and $\hat{n}$ is the smallest number satisfying $n \le \hat{n} \cdot f^{-1}(T(\hat{n}))$. The intuition here is that the ``worst-case instances'' of size $n$ will be grids of width approximately $\hat{n}$ and height approximately $f^{-1}(T(\hat{n}))$.

\begin{restatable}{theorem}{thmupperbound}\label{thm:upper-bound}
	Problem $\Pi_M$ can be solved in $O\bigl(f^{-1}\bigl(T(\hat{n})\bigr)\bigr)$ rounds.
\end{restatable}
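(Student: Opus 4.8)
The plan is to exhibit an explicit $O(f^{-1}(T(\hat n)))$-round algorithm for $\Pi_M$. Each node $v$ first gathers its radius-$T$ neighbourhood for a suitable $T = O(f^{-1}(T(\hat n)))$ and checks whether the local constraints of link machine encoding graphs (Section~\ref{subsec:local-checkability}) hold at $v$ and all its neighbours. If some constraint fails in $v$'s radius-$1$ neighbourhood, $v$ outputs \error{}; this is legal by condition~(1) and requires only $O(1)$ rounds. If the constraints hold locally but $v$ has an incident $\dirdown$ edge, then $v$ is not on the bottom row and can safely output $\epsilon$ (condition~(3)). The interesting case is when the constraints hold locally and $v$ is on the bottom cycle: here $v$ must output a label of $\Pi$, and the constraints of $\Pi$ must be satisfied among bottom-row nodes. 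So the heart of the argument is: a bottom-row node can solve $\Pi$ on the bottom cycle in $O(f^{-1}(T(\hat n)))$ rounds.

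First I would establish the \emph{distance-compression} claim: if $v$ is on the bottom row and the graph looks locally like a link machine encoding graph within $v$'s horizon, then in $\Theta(\ell)$ communication rounds $v$ learns the portion of the bottom cycle consisting of its $f(\ell) = \ell \cdot g(\ell)$ successors (and predecessors). The idea is that from the bottom row one climbs $\ell$ levels using grid ($\dirup$) edges, then at level $\ell$ uses one link edge of length $r(\ell, i) \le g(\ell)$ to jump horizontally, then descends $\ell$ levels; the local constraints (3)--(4) guarantee these link edges genuinely encode register values. Iterating, in $O(\ell)$ hops one covers $\ell \cdot g(\ell)$ columns of the bottom cycle. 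Crucially, this only uses what $v$ sees in its radius-$O(\ell)$ neighbourhood, and the local constraints certify the structure up to the point where register values might exceed the grid width — but by Lemma~\ref{lem:extendedmachines} the relevant instance is (at worst) an extended link machine encoding graph with a good bottom part of dimensions $\ell^* \times \Theta(g(\ell^*))$, and the bottom cycle has length exactly $w = 3g(\ell^*) \ge g(h)$ where $h$ is the true grid height. Then since $\Pi$ is solvable on directed cycles with a one-sided algorithm in time $T(m)$ on an $m$-node cycle, and the bottom cycle here has $m = w$ nodes, a bottom-row node needs to see only its $T(w)$ cycle-successors; by the compression claim this is achieved once $\ell$ is large enough that $f(\ell) \ge T(w)$, i.e. $\ell = \Theta(f^{-1}(T(w)))$ rounds suffice. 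Feeding in the \emph{one-sidedness} of the $\Pi$-algorithm is what lets all bottom-row nodes consistently simulate the same cycle algorithm and thereby satisfy the $\Pi$-constraints.

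Next I would translate $w$ and $h$ into the global parameter $n$. In an extended link machine encoding graph the number of nodes is $n = h \cdot w$ with $w = 3 g(\ell^*)$ for some $\ell^* \le h$; combined with monotonicity of $g$ and the definition $\hat n = $ the smallest number with $n \le \hat n \cdot f^{-1}(T(\hat n))$, one checks that $w = O(\hat n)$ and hence $T(w) \le T(\hat n)$ and $f^{-1}(T(w)) = O(f^{-1}(T(\hat n)))$. This is the routine monotonicity bookkeeping; I would phrase it as: the worst case is a grid as wide as possible given $n$ nodes, and $\hat n$ is defined precisely to be that extremal width. Plugging back, every bottom-row node halts in $O(f^{-1}(T(\hat n)))$ rounds, and for an adversarial (non-grid, or grid-with-garbage) instance some node legitimately outputs \error{} or $\epsilon$ in $O(1)$ rounds, so globally the algorithm runs in $O(f^{-1}(T(\hat n)))$ rounds, and output validity holds in every case by construction.

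The main obstacle is the distance-compression claim together with making it robust to adversarial inputs. One must argue that a bottom-row node can \emph{actually perform} the climb-jump-descend navigation using only locally verified structure — i.e., that the local constraints (3)--(4), which only certify correctness ``as long as $g(\ell) \le w$'', are enough to let a node trust the link edges it traverses out to radius $\Theta(\ell)$. The clean way is: either the instance is globally a (extended) link machine encoding graph, in which case Lemma~\ref{lem:extendedmachines} gives a genuine good bottom part and the navigation provably works; or it is not, in which case by that same lemma \emph{some} node fails a local check and legitimately reports \error{}, and a careful choice of which nodes output what (bottom row solves $\Pi$ / middle rows output $\epsilon$ / locally-bad nodes output \error{}) still produces a globally valid labelling. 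Getting this case analysis airtight — especially verifying that in the ``good'' case the $\ell = \Theta(f^{-1}(T(\hat n)))$ horizon really is large enough to close the one-sided $\Pi$-algorithm around the full bottom cycle, and that the $\epsilon$/\error{} outputs never violate conditions (1)--(3) — is where the real work lies; the arithmetic relating $f$, $g$, $T$, and $\hat n$ is comparatively mechanical.
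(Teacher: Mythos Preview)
Your proposal has a genuine gap: it never uses the error pointer output, and this omission breaks correctness on adversarial inputs. You assert that ``bottom row solves $\Pi$ / middle rows output $\epsilon$ / locally-bad nodes output \error{}'' is always a globally valid labelling, but this is false. Consider a \emph{truncated} link machine encoding graph: take a valid $h \times w$ grid with $w = 3g(h)$, then delete all rows above some small constant height $h' \ll h$. The only nodes failing a local check are on rows $h'$ and $h'-1$ (constraint (5) fails at the top). Every node on row $1$ is \emph{not} a near-witness, has no $\dirdown$ edge, and therefore by the $\lcl$ definition must output either a $\Pi$-label or an error pointer. Your algorithm forces it to output a $\Pi$-label. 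But the bottom cycle has length $w = n/h' = \Theta(n)$, while the longest link edge anywhere in the graph has length $g(h') = O(1)$, so in $r = O(f^{-1}(T(\hat n)))$ rounds a bottom-row node sees only $O(r)$ columns --- far short of the $T(w) = T(\Theta(n))$ successors needed to run the one-sided $\Pi$-algorithm. Your claim ``$w = O(\hat n)$'' is precisely what fails here: it holds for extended link machine encoding graphs, but the truncated grid is not one (it violates constraint (5)), and nothing in your case split covers it.

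The paper's algorithm handles exactly this case via error pointers: a bottom-row node that \emph{does} see an error within its radius-$r$ horizon locates the nearest column containing a near-witness, outputs an error pointer, and participates in building a consistently labelled $\dirright$-then-$\dirup$ pointer chain (with the mod-$2$ counter and the copied input $i(w)$) to that near-witness. The dichotomy you need is not ``globally valid encoding vs.\ not'', but rather, for each bottom-row node, ``no error visible within radius $r$'' (then \lemmaref{lemma:hatn} lets it solve $\Pi$) versus ``some error visible within radius $r$'' (then it outputs an error pointer, and one must verify that all nodes along the resulting chain agree). The coordination of these pointer chains across nodes --- ensuring that intermediate nodes on the path also output compatible pointers --- is the part of the upper-bound argument that your plan is missing entirely.
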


\vspace{-2ex}
\begin{restatable}{theorem}{thmlowerbound}\label{thm:lower-bound}
	Problem $\Pi_M$ cannot be solved in $o\bigl(f^{-1}\bigl(T(\hat{n})\bigr)\bigr)$ rounds.
\end{restatable}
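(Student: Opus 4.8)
The plan is to show that any algorithm $\A$ solving $\Pi_M$ in time $t = o(f^{-1}(T(\hat n)))$ can be turned into an algorithm solving $\Pi$ on directed cycles in time $o(T(m))$, contradicting the assumed complexity of $\Pi$. First I would fix a large $m$ and construct the ``hard instance'': take $\hat n$ as in the statement (for the cycle of length $m$ we want $m = \Theta(\hat n)$ so that $n = \Theta(m \cdot f^{-1}(T(\hat n)))$), and build a genuine link machine encoding graph $G$ for $M$ with bottom cycle of length $n_{\text{grid}} = 3g(h)$ where $h = f^{-1}(T(\hat n))$, so that the bottom cycle has length $\Theta(\hat n)$ and $G$ has $\Theta(n)$ nodes total. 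Into the bottom cycle I embed an arbitrary instance of $\Pi$ on a directed cycle of that length. Since $G$ is a valid encoding graph, by the correctness conditions for $\Pi_M$ no node is forced to output $\error$, and a node on the bottom row either outputs a label of $\Pi$ or starts an error pointer; the crucial point, to be established, is that on a genuine encoding graph the error-pointer escape route is unavailable in $o(h)$ time, so the bottom-cycle nodes are effectively forced to solve $\Pi$.

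The key geometric lemma I would prove is the ``distance compression'' claim already advertised in the text: in a link machine encoding graph, a node on the bottom cycle can see, within $c\ell$ hops in $G$, every node within distance $f(\ell) = \ell g(\ell)$ along the bottom cycle — walk up $\ell$ levels (using that $g$ is non-decreasing, so level $\ell$ carries a link edge of length $g(\ell)$ as the max-value register), take one long link edge, and walk back down. Conversely — and this is the direction that matters for the lower bound — a radius-$t$ neighbourhood in $G$ of a bottom node reveals only $O(f(t))$ consecutive bottom-cycle vertices and their input labels (one shows by induction on the levels that from level $\ell$ the reachable horizontal span is at most $O(\ell \cdot g(\ell))$, using that all link-edge lengths on levels $\le \ell$ are bounded by $g(\ell)$). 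Hence a $t$-round algorithm for $\Pi_M$, restricted to a bottom node of a genuine encoding graph, is a function of only the $O(f(t))$-radius cycle-neighbourhood — exactly the data available to an $O(f(t))$-round one-sided algorithm for $\Pi$ on the cycle.

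Next I must rule out the cheating option: maybe many bottom nodes output error pointers instead of $\Pi$-labels and thereby avoid having to be consistent with $\Pi$. Here I use the design constraints on error pointers. An error pointer points $\dirright$ or $\dirup$, and a chain must terminate at a node outputting $\error$; on a genuine encoding graph no node may output $\error$, so every chain must be infinite, which is impossible in a finite graph — therefore every chain must leave the bottom cycle by turning $\dirup$ and climb to the top level. But the top level has no $\dirup$ edge, and a top node outputting an error pointer must point $\dirup$, which does not exist — contradiction. Consequently, on a genuine encoding graph \emph{no} node may output an error pointer either, so every bottom node must output a label of $\Pi$ and the induced labelling must satisfy the local constraints of $\Pi$ on the whole bottom cycle. (A subtle point to check: the ``$\dirright$ part of a chain must be properly $2$-coloured'' and ``$\dirup$ part copies $i(w)$'' clauses are not needed to kill cheating on genuine instances — the termination argument alone suffices — but I should still verify the pointer-output conditions are not vacuously satisfiable, e.g.\ a lone node outputting $\dirright$ whose right neighbour outputs a $\Pi$-label violates ``the node at the other end outputs $\error$ or an error pointer''.)

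Putting it together: from $\A$ running in $t = o(f^{-1}(T(\hat n))) = o(h)$ rounds I obtain an algorithm that, on every directed cycle of length $\Theta(\hat n)$ with any input, outputs a valid $\Pi$-solution while examining only an $O(f(t))$-neighbourhood. Since $f$ is monotone and $t = o(f^{-1}(T(\hat n)))$ gives $f(t) = o(T(\hat n))$, and $\hat n = \Theta(m)$ for the cycle length $m$ we built, this is an $o(T(m))$-round one-sided algorithm for $\Pi$ on $m$-cycles — contradicting the complexity lower bound for $\Pi$. The main obstacle I expect is the careful bookkeeping in the converse distance bound (showing a radius-$t$ ball in $G$ spans only $O(f(t))$ bottom-cycle columns for \emph{all} encoding graphs of $M$, using only non-decreasing growth and the $2^{O(n)}$ bound), together with making the two sizes $n$, $\hat n$, $h$, and the cycle length $m$ line up so that the parameters $\hat n = \Theta(m)$ and $n = \Theta(\hat n \cdot f^{-1}(T(\hat n)))$ are simultaneously consistent with $n = 3g(h) \cdot h$.
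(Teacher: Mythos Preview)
Your overall simulation strategy is right, but the choice of hard instance is where the argument breaks. You build a \emph{genuine} link machine encoding graph with height $h=f^{-1}(T(\hat n))$ and claim its bottom cycle has length $3g(h)=\Theta(\hat n)$. That equality is false in general. For $\Pi=\Pi_g$ (so $T(m)=m$), setting $h=f^{-1}(\hat n)$ gives $f(h)=h\,g(h)\approx\hat n$, hence $g(h)\approx\hat n/h$, and the bottom cycle has length $\Theta(\hat n/h)$, not $\Theta(\hat n)$. Concretely, with $g(k)=k^2$ a genuine graph has width $3h^2$ and $n=3h^3$; your simulation yields a lower bound of $f^{-1}(3h^2)=\Theta(h^{2/3})=\Theta(n^{2/9})$, whereas the theorem asserts $\Theta(n^{1/4})$. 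The point is that a genuine encoding graph is forced to have height $g^{-1}(w/3)$ for width $w$, which is much taller than the $f^{-1}(T(w))$ you actually need, so it wastes nodes and is not the worst case.

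The paper fixes this by using \emph{truncated} encoding graphs: start from a valid encoding graph and delete rows from the top until the height is exactly $\Theta(f^{-1}(T(\hat n)))$. This decouples width from height and makes the parameters line up. But truncation introduces genuine local errors near the new top row, so nodes there \emph{can} legitimately output $\error$ --- and now your ``no $\error$ possible, hence no error pointers possible'' argument collapses. This is precisely why the error-pointer design carries the two extra clauses you set aside as ``not needed'': on a truncated graph the bottom nodes have three legal options, and the case analysis is (i) some node points $\dirup$ --- then the chain reaches a witness near the top and must be labelled with that witness's input bit, forcing $\Omega(\text{height})$ rounds; (ii) all bottom nodes point $\dirright$ --- then the bottom cycle is properly $2$-coloured, which is at least as hard as $\Pi$; (iii) all bottom nodes output $\Pi$-labels --- handled by the simulation. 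The $2$-colouring and input-copying constraints are doing real work here, not just belt-and-braces.

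A smaller slip, for the record: even on a genuine encoding graph your ``chains cannot be infinite in a finite graph'' step overlooks that a $\dirright$-only cycle around the bottom row satisfies all the pointer rules (each target is again an error pointer), provided it is $2$-colourable. So error pointers are not outright impossible there; they just require solving something at least as hard as $\Pi$. This is easy to patch, but it foreshadows the case split you will need anyway once you move to truncated instances.
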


\subsubsection{Upper bound -- proof of Theorem \ref{thm:upper-bound}}\label{app:upper-bound}

We start by observing that the link machine encoding graphs essentially provide a `speed-up' in terms of how quickly the nodes on the bottom cycle can see other nodes on the bottom cycle. Recall that $f(k) = k g(k)$, where $g$ is the growth of link machine $M$; also recall that $g(k) = 2^{O(k)}$.

\begin{lemma}\label{lemma:link-speedup}
	Let $G$ be a link machine encoding graph for $M$, and let $u$ and $v$ be nodes on the bottom cycle. If $u$ can be reached from $v$ in $\ell$ steps following edges labelled with $\dirright$, then $u$ can be reached from $v$ in $O\bigl(f^{-1}(\ell)\bigr)$ steps following edges labelled with $\dirup$, $\dirdown$, $\dirright$ or register labels.
\end{lemma}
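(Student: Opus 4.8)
The plan is to exploit the link edges level by level: to travel a horizontal distance $\ell$ along the bottom cycle cheaply, we climb up to a level where the register holding the maximum value is roughly $g(\text{level})$, take a small number of long horizontal hops there, and climb back down. First I would set $m = f^{-1}(\ell)$, so that $f(m) = m\,g(m) \ge \ell$ by definition of $f^{-1}$ (and, using that $g$ is non-decreasing and grows at most exponentially, $f(m-1) < \ell$, so $m\,g(m)$ is within a constant factor of $\ell$ in the relevant sense). The route from $v$ to $u$ is then: (i) go up $m$ steps using $\dirup$ edges from $v$, reaching level $m$ directly above $v$; (ii) on level $m$, follow the link edges labelled with the maximum-value register, each of which has length $g(m)$ (here I would invoke that on level $m$ the register of maximum value is exactly $g(m)$, as guaranteed by the definition of growth and the local-label construction in Section~\ref{sec:link-machine-encoding}); (iii) after at most $\lceil \ell / g(m)\rceil \le m + O(1)$ such hops we are on level $m$ in a column within distance $g(m)$ of $u$'s column; (iv) take at most $g(m)$ further $\dirright$ steps on level $m$ — wait, that is too many.

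The fix is to not overshoot: instead take $\lfloor \ell/g(m)\rfloor$ long hops, landing within $g(m)$ columns to the left of $u$, and then cover the remaining $r < g(m)$ columns by going \emph{down} first is no good either. The clean way is to descend back to the bottom cycle after step (iii) — at the column we have reached, which is within $g(m)$ of $u$ — and then walk the remaining $r < g(m)$ steps along $\dirright$ on the bottom cycle. But $g(m)$ can be much larger than $m$, so this still costs too much. The genuinely correct approach, which I would carry out carefully, is a \emph{recursive / mixed-radix} decomposition: write $\ell$ in the "base" given by the successive register values $g(1) \le g(2) \le \dots \le g(m)$, i.e.\ express the horizontal displacement greedily as $\ell = \sum_{j=1}^{m} c_j\, g(j)$ where each digit $c_j$ is bounded by a constant (this is possible precisely because $g$ grows at most geometrically, so $g(j+1) = O(g(j))$; this is exactly where the hypothesis $g(k) = 2^{O(k)}$ is used). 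Then the path visits each level $j$ from $1$ up to $m$ and back, spending $O(c_j) = O(1)$ horizontal link-hops on level $j$ using edges of length $g(j)$, plus $O(1)$ vertical steps to move between consecutive levels — total $O(m) = O(f^{-1}(\ell))$ steps, all along $\dirup$, $\dirdown$, $\dirright$, or register-labelled edges, as required.

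The main obstacle is the bookkeeping in step (iv)-type remainders: ensuring that at \emph{every} level the leftover displacement is genuinely expressible with $O(1)$ hops of the available link lengths, and that the vertical excursions do not accumulate. This is where one must use that on level $j$ the link edges realise \emph{all} the register values $r(j,i)$ appearing during the $j$th execution, not just the maximum — so there is enough flexibility to hit the needed residues — together with $g(j+1)/g(j) = O(1)$. I would state a short sub-claim: for any $0 \le r < g(j+1)$ there is an $O(1)$-length walk on levels $\le j+1$ covering horizontal displacement $r$; prove it by induction on $j$, the base case $j = 0$ being trivial since $g(1) = O(1)$. Assembling these sub-walks along the spine $v \to$ (level $m$) $\to u$ gives the bound. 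Finally I would note that $f^{-1}(\ell) = O(f^{-1}(\ell))$ swallows all the hidden constants, so the stated $O(f^{-1}(\ell))$ bound holds.
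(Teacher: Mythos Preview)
Your final approach is essentially the paper's: climb to level $k \approx f^{-1}(\ell)$, take long hops of length $g(k)$ there, then descend greedily, spending $O(1)$ hops at each lower level using the bounded ratio $g(j{+}1)/g(j) = O(1)$ (which is exactly where the paper invokes $g(k) = 2^{O(k)}$ as well).

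One small correction to your mixed-radix formulation: the top digit $c_m$ is \emph{not} bounded by a constant in general --- it can be as large as $\lfloor \ell / g(m) \rfloor$, which is only bounded by $m$ (from $f(m) = m\,g(m) \ge \ell$). Only the lower digits $c_j$ for $j < m$ are $O(1)$, via $c_j < g(j{+}1)/g(j)$. This still sums to $O(m)$, so your conclusion stands; the paper makes exactly this split explicit by taking ``at most $k$'' hops on the top level and $O(1)$ hops per level on the way down. Your additional worry about needing access to \emph{all} intermediate register values $r(j,i)$ rather than just the maximum $g(j)$ is unnecessary: the greedy decomposition using only the lengths $g(1), g(2), \dotsc, g(m)$ already suffices.
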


\begin{proof}

	Starting from a node $v$ on the cycle, in $O(k)$ steps it is possible to see a node on the cycle that is $f(k) = k g(k)$ steps away: take $k$ steps up, $k$ steps right along shortcuts, and $k$ steps down. We will use a similar procedure to go to a node at any distance $\ell$.
	
	Let $k$ be the smallest value such that $f(k) \ge \ell$. By assumption, $f(k-1) = (k-1)g(k-1) < \ell$. Since $g(k) = 2^{O(k)}$, we have $g(k)/g(k-1) \le k-1$ for large enough graphs, and hence $g(k) \le (k-1)g(k-1) = f(k-1) < \ell$.
	
	We find a path $P$ from $v$ to $u$ by a greedy procedure. First go up for $k$ steps. Recall that at height $k$ there are shortcuts of length $g(k)$. Go right along shortcuts until the distance to the column of $u$ is less than $g(k)$ (taking one more shortcut would bring us to a node that is on the right of~$u$). This takes at most $k$ steps. Next step down and do a greedy descent to get to the column of~$u$. At each level $h$, if the remaining distance to the column of $u$ is at least $g(h)$, take steps along the longest shortcut until the distance is less than $g(h)$. Since $g(k) = 2^{O(k)}$, the length of the shortcuts at level $h$ is at most a constant number of times the length at level $h-1$, hence this number of steps is bounded by $O(1)$. Finally, step down. We either reach the column of $u$ or the bottom cycle, and in this case the distance to the column of $u$ is less than $g(1)$.
	
    Since $f(k-1) < \ell$, then $k < f^{-1}(\ell) +1$. We take a total of at most $k$ steps up and down, at most $k$ steps right at level $k$, and $O(1)$ steps for each of the $k$ levels, for a total of $O(f^{-1}(\ell))$ steps.
\end{proof}

\begin{lemma}\label{lemma:hatn}
	If a node not having an edge labelled \dirdown{} sees no errors within distance $r = C f^{-1}(T(\hat{n}))$ for a sufficiently large constant $C$, then it can produce a valid output for the problem $\Pi$.
\end{lemma}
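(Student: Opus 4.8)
The plan is to show that, from $v$'s vantage point, the radius-$r$ ball is indistinguishable from that of a bottom-row node in a genuine link machine encoding graph for $M$ on the ``good'' low levels, so that $v$ can ride the shortcut structure of \lemmaref{lemma:link-speedup} to gather enough of the bottom cycle to simulate a one-sided algorithm for $\Pi$.

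First I would unpack ``sees no errors within distance $r$'': every node within distance $r$ of $v$ satisfies the local constraints \ref{item:link-first}--\ref{item:link-last}, and $v$ itself (having no $\dirdown$ edge) sits on the bottom row. Running the level-by-level argument from the proof of \lemmaref{lem:extendedmachines}, but restricted to $v$'s ball, the link edges on the low levels are exactly the genuine ones produced by executing $M$: on row $\ell$ they encode the register values $r(\ell,i)$, and this correspondence can only break at the first level $\ell_0$ at which some register value reaches the grid width, i.e.\ $g(\ell_0)=\Theta(w)$. Above $\ell_0$ there may be arbitrary garbage that still passes the local checks (exactly as with extended encoding graphs), but $v$ need not trust it: since $v$ knows the program of $M$, it can simulate $M$ itself, recognise which incident edges are the genuine register-edges, and run the greedy ``up, right, down'' procedure of \lemmaref{lemma:link-speedup} along them.

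Next I would convert the speed-up into a bound on how much of the bottom cycle $v$ sees. By \lemmaref{lemma:link-speedup}, a bottom-cycle node $\ell$ hops away along $\dirright$ is reachable in $O(f^{-1}(\ell))$ rounds, and the procedure only descends through levels up to depth $O(f^{-1}(\ell))$. So, provided $f^{-1}(\ell)\le\ell_0$ (the procedure stays inside the good part) and $O(f^{-1}(\ell))\le r$, the node $v$ sees its first $\ell$ successors on the bottom cycle together with their input labels. Taking $\ell = T(N)$, where $N = 3g(\ell_0)$ is the length of the bottom cycle, both side conditions become routine: $T(N) = T(3g(\ell_0))\le 3g(\ell_0)\le \ell_0 g(\ell_0) = f(\ell_0)$ for all but $O(1)$-size instances (here $T$ is at most linear), giving $f^{-1}(T(N))\le\ell_0$; and $N < \hat n$, which follows from $n \ge 3\ell_0 g(\ell_0) = 3 f(\ell_0)$ together with the monotonicity of $m\mapsto m\,f^{-1}(T(m))$ (if $N\ge\hat n$ we would get $T(3g(\ell_0))\gtrsim \ell_0 g(\ell_0)$, again impossible for $\ell_0\ge$ const), so $O(f^{-1}(T(N)))\le O(f^{-1}(T(\hat n))) = O(r/C)\le r$ once $C$ is large enough.

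Finally, with its first $T(N)$ successors on the bottom cycle in hand, $v$ simulates the one-sided $T$-round algorithm $\A_\Pi$ for $\Pi$: for the $2$-colouring variant it has in fact read off the entire bottom cycle (so it can solve $\Pi$ outright), and for $3$-colouring it runs the $O(\log^*\cdot)$ Cole--Vishkin-type algorithm with parameter $\hat n$, which is valid because the identifiers are polynomially bounded in $n$, $n$ is at most quasi-polynomially bounded in $\hat n$ for our growth functions, and hence $\log^*\hat n = \log^*(\text{max id}) + O(1)$. The output label produced is one of $\Pi$'s labels and, on the bottom cycle, agrees with the globally valid $\Pi$-solution obtained by running $\A_\Pi$ at every bottom-row node, so it meets the output constraints of $\Pi_M$ at $v$ (the only neighbours of $v$ that may also carry $\Pi$-labels are its bottom-cycle neighbours, since any up-neighbour has a $\dirdown$ edge). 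I expect the main obstacle to be exactly this last bookkeeping that reconciles the three parameters $n$, $\hat n$ and the true bottom-cycle length $N$ — pinning down $N<\hat n$ and showing the $\A_\Pi$-simulation remains correct with the parameter $v$ actually uses — while everything else is a direct appeal to \lemmaref{lemma:link-speedup}, the local-checkability analysis, and the known complexity of $\Pi$ on cycles.
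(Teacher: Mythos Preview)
Your approach is structurally different from the paper's, and the difference matters.

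The paper does \emph{not} attempt to name or bound the bottom-cycle length at all. Its argument is a direct volume/pigeonhole count: in the global case $T(\hat n)=\Theta(\hat n)$, if $v$ explores up and to the right for $r=Cf^{-1}(\hat n)$ rounds and neither meets an error nor sees the bottom row wrap around, then it has uncovered an error-free grid fragment of dimensions $\Theta(f^{-1}(\hat n))\times\Theta(\hat n)$, hence $\Theta(C^2\,\hat n\,f^{-1}(\hat n))\ge C^2 n$ distinct nodes, which is impossible for large $C$. So the bottom row must wrap inside $v$'s view, and $v$ solves $\Pi$ by brute force. For $T(\hat n)=\Theta(\log^*\hat n)$ the same exploration exhibits $\Theta(\log^*\hat n)$ bottom successors; since $n=o(\hat n^2)$ one has $\log^* n=O(\log^*\hat n)$, and that is already enough to run the one-sided $3$-colouring algorithm with parameter $n$.

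Your route instead names $N$ (the bottom-cycle length) and $\ell_0$ (the first wrapping level), and tries to prove $N\le\hat n$ from $n\ge 3\ell_0 g(\ell_0)$. This is where the gap is. The hypothesis of the lemma is only that the radius-$r$ ball around $v$ is error-free; nothing is assumed about the graph outside that ball. Your inequality $n\ge 3\ell_0 g(\ell_0)=\ell_0 N$ presupposes that the grid globally contains at least $\ell_0$ full rows of width $N$, but an adversary can give you a grid of height $h<\ell_0$ (so constraint~(5) fails at the top) with $h>r$ so that $v$ never reaches the top and never sees that error; then $n=hN$ need not dominate $\ell_0 N$. More basically, from $v$'s viewpoint the bottom row may not wrap at all inside the ball, so ``$N$'' is not a quantity $v$ can access, and your chain $N f^{-1}(T(N))\le n$ has no footing. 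The paper's counting argument avoids all of this precisely because it never asks what $N$ is: either the row wraps (done), or the rectangle is too large (contradiction).

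Your plan is salvageable by inserting exactly that pigeonhole step in place of the $n\ge 3\ell_0 g(\ell_0)$ claim. Note also that for the $\log^*$ case your detour through $N$ is unnecessary: Cole--Vishkin's round count depends on the identifier range, not on the cycle length, so once you know $v$ sees $\Theta(\log^*\hat n)=\Theta(\log^* n)$ successors via \lemmaref{lemma:link-speedup}, you are done regardless of $N$.
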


\begin{proof}
	First consider a global problem, i.e., $T(n) = \Theta(n)$. If we explore the grid up and right for $r = C f^{-1}(\hat T(n))$ steps, and we do not encounter any errors, and the grid does not wrap around, then we would discover a grid fragment of dimensions at least $D f^{-1}(\hat n) \times D \hat n$ for a $D = \Omega(C)$. Such a grid fragment would contain $D^2 n$ nodes, and for a sufficiently large $D$ this would contradict the assumption that the input has $n$ nodes. Hence we must encounter errors (which by assumption is not the case), or the grid has to wrap around cleanly without any errors, in which case we also see the entire bottom row and we can solve $\Pi$ there by brute force.

	Second, consider the case $T(n) = \Theta(\log^* n)$. By a similar reasoning, the node can gather a grid fragment of dimensions $D f^{-1}(\log^* \hat n) \times D \log^* \hat n$. In particular, it can see a fragment of length $D \log^* \hat n$ of the bottom row. Furthermore, we have $\log^* n = O(\log^* \hat{n})$: to see this, note that $g$ is non-decreasing, $f(k) \ge k$, and hence $n = o(\hat{n}^2)$. Therefore in $r$ rounds, for a sufficiently large $C$, we can gather a fragment of the bottom row that spans up to distance at least $T(n)$, and this is enough to solve~$\Pi$.
\end{proof}

As a consequence, we obtain an upper bound for the complexity of $\Pi_M$.

\thmupperbound*

\begin{proof}
	The idea of the algorithm that solves the described \lcl{} problem is the following.
	First, each node gathers its constant-radius neighbourhood, and sees if there is a local error:
	\begin{itemize}
		\item If a node witnesses a local error, it marks itself as `witness'
		\item If a node is either a witness itself, or it is adjacent to a witness, it marks itself as a \emph{near-witness}, and outputs $\error$.
	\end{itemize}
	
	Now let $r = c\, f^{-1}(T(\hat{n}))$, for a large enough constant $c$. Each node $v$ in the bottom cycle -- not having an edge labelled \dirdown{} -- attempts to gather full information about an $r \times f(r)$ rectangle to the up and right from node $v$, that is, a rectangle composed by the bottom-most $r$ nodes of the first $f(r)$ columns to the right of $v$. By \lemmaref{lemma:link-speedup}, in $O(r)$ rounds we can either successfully gather the entire rectangle if it is error-free, or we can discover the nearest column that contains a near-witness:
	\begin{enumerate}
		\item If the entire rectangle is error-free, we can solve $\Pi$ on the bottom row by \lemmaref{lemma:hatn}. 
		\item Otherwise, we find the nearest column containing a near-witness $\bar{w}$. In such a case, node $v$ will output its modulo-2 distance from that column, the input $i(\bar{w})$ of the witness, and produce a path of error pointers that spans a sequence of edges labelled $\dirright$ followed by a sequence of edges labelled $\dirup$, reaching $\bar{w}$. Notice that this path is unique and always exists, since all columns before the nearest one containing a witness must be fault-free (up to height $r$), and if a witness is in the same column of $v$, the lowest one can be always reached by a fault-free path spanning only edges labelled $\dirup$.
	\end{enumerate}
	Finally, nodes that are not on the bottom cycle and do not see bottom nodes wanting to produce error pointer paths produce empty outputs.
	
	Clearly, this produces a valid solution to $\Pi_M$ on extended link machine encoding graphs, since they satisfy \lemmaref{lemma:hatn}. Also, if there are no witnesses and every node has an edge labelled \dirdown{}, all nodes produce empty outputs, that is valid. 
	
	Now, consider a graph that is not an extended link machine encoding graph. A node will explore the graph for $\Theta(r)$ rounds. If the node satisfies the requirements of \lemmaref{lemma:hatn}, then it produces a valid solution for the problem $\Pi$. Otherwise the node sees a witness. If a node $v$ decides to produce an error pointer towards a near-witness $\bar{w}$, then all the nodes on the error path will produce an error pointer towards $\bar{w}$. This follows from the observation that, on valid fragments, nodes on the same row reach the same height while visiting the graph, due to the rectangular visit. Thus, if $v$ outputs a pointer towards $\bar{w}$, then all the intermediate nodes will output a pointer, and these pointers will correctly produce a path from $v$ to $\bar{w}$ with the right modulo-2 distance and $c_w$ labelling.
\end{proof}

\subsubsection{Lower bound -- proof of Theorem \ref{thm:lower-bound}}\label{app:lower-bound}

Next, we prove that the upper bound in \theoremref{thm:upper-bound} is tight. The worst-case instances are \emph{truncated} link machine encoding graphs, defined as follows: take a valid link machine encoding graph and remove rows from the top, until it is satisfied that $n \le \hat{n} \cdot f^{-1}(T(\hat{n}))$, where $\hat{n}$ is the length of the bottom cycle. The basic idea is to show that on truncated link machine encoding graphs, any algorithm has two choices, both of them equally bad:
\begin{itemize}[noitemsep]
	\item We can solve problem $\Pi$ on the bottom cycle, but this requires time $\Omega\bigl(f^{-1}(T(\hat{n}))\bigr)$.
	\item We can report an error, but this also requires time $\Omega\bigl(f^{-1}(T(\hat{n}))\bigr)$.
\end{itemize}
Note that truncated link machine encoding graphs have errors, and hence it is fine for a node to report an error. However, all witnesses are on the top row (or next to it), and constructing a correctly labelled error pointer chain from the bottom row to the top row takes time linear in the height of the construction. We will formalise this intuition in what follows.

\begin{lemma} \label{lem:lm-view}
	Let $G$ be a truncated link machine encoding graph,  $v$ be a node on the bottom row, and let $h$ be any function satisfying $h(\ell) = o\bigl(f^{-1}(\ell)\bigr)$. Let $X$ be the set of all nodes that $v$ can see in $h(\ell)$ steps. Then $X$ is contained in the subgraph induced by the columns within distance $o(\ell)$ of $v$.
\end{lemma}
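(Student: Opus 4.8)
The plan is to bound, level by level, how far to the left or right a path of length $h(\ell)$ starting from $v$ can reach, by tracking the maximum shortcut length available at each height. First I would recall the structure of a truncated link machine encoding graph: it is a $\hat h \times \hat n$ grid (wrapping horizontally), with $\hat n = 3g(\hat n_{\text{orig}})$ for the original height, and at level $j$ the horizontal link edges have length at most $g(j)$, since $g$ is non-decreasing and level $j$ encodes the register values after $j$ executions, all of which are $\le g(j)$. A single step of an $h(\ell)$-step walk is either vertical (changing the level by $1$ and the column by $0$) or horizontal at the current level $j$ (changing the column by at most $g(j)$ in absolute value). So if the walk ever reaches level $j$, it must have taken at least $j$ vertical steps to get there, which already costs $j \le h(\ell)$ of its budget.

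The key estimate is then: the total horizontal displacement of a walk of length $m := h(\ell)$ is at most $\sum_{\text{horizontal steps}} (\text{length of that step}) \le m \cdot g(m)$, because every horizontal step, at whatever level it occurs, has length at most $g(\text{that level}) \le g(m)$ (any level reached is $\le m$), and there are at most $m$ steps in total. Hence the columns visited lie within distance $m \cdot g(m) = f(m) = f(h(\ell))$ of $v$'s column. Now I would use the hypothesis $h(\ell) = o(f^{-1}(\ell))$ together with monotonicity of $f$: applying $f$ gives $f(h(\ell)) = o\bigl(f(f^{-1}(\ell))\bigr) \le o(\ell)$, using $f(f^{-1}(\ell)) \ge \ell$ from the definition of $f^{-1}$ — wait, that inequality goes the wrong way, so here I must instead argue directly that $f(h(\ell)) = o(\ell)$: since $f$ is non-decreasing and $h(\ell) \le \varepsilon f^{-1}(\ell)$ eventually for every $\varepsilon > 0$, and since $f$ is at most exponential (indeed $f(k) = k g(k)$ with $g(k) = 2^{O(k)}$), we get $f\bigl(\varepsilon f^{-1}(\ell)\bigr) \le f\bigl(f^{-1}(\ell)\bigr)^{O(\varepsilon)} \cdot \mathrm{poly}$, and $f(f^{-1}(\ell)) = \Theta(\ell)$ for the growth functions we use, so $f(h(\ell)) = \ell^{O(\varepsilon) + o(1)} = o(\ell)$. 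This yields that $X$ is contained in the columns within distance $o(\ell)$ of $v$, as claimed.

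The main obstacle I anticipate is exactly this last step — passing from $h(\ell) = o(f^{-1}(\ell))$ to $f(h(\ell)) = o(\ell)$ is not automatic for arbitrary $f$, and requires the at-most-exponential growth bound $g = 2^{O(\ell)}$ (equivalently $f(2k) \le \mathrm{poly}(f(k))$, a "moderate growth" / polynomial-in-doubling condition). This is precisely the reason the paper restricts to $g \in 2^{O(n)}$, and I would state this condition explicitly and verify it for each concrete building block before invoking it, or alternatively phrase the lemma's conclusion as "within distance $f(h(\ell))$", which is $o(\ell)$ for all the $f$'s actually used. The rest — the per-level shortcut-length bound and the displacement sum — is routine and follows directly from the construction in \sectionref{sec:link-machine-encoding} together with \lemmaref{lemma:link-speedup}'s observation that level-$j$ shortcuts have length $g(j)$.
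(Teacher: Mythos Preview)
Your first half is correct and matches the paper exactly: any walk of length $m = h(\ell)$ can reach level at most $m$, every horizontal edge at level $j$ has length at most $g(j) \le g(m)$, so the total column displacement is at most $m\,g(m) = f(m) = f(h(\ell))$. The paper states this in one line (``the maximum distance in columns we can reach in $\ell$ steps is bounded by $f(\ell)$'').

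The gap is in your last step, turning $h(\ell) = o(f^{-1}(\ell))$ into $f(h(\ell)) = o(\ell)$. Your proposed inequality $f(\varepsilon m) \le f(m)^{O(\varepsilon)} \cdot \mathrm{poly}$ is false for the polynomial growths you need it for: take $f(k) = k^2$, then $f(\varepsilon m) = \varepsilon^2 m^2$ while $f(m)^{c\varepsilon} = m^{2c\varepsilon}$, and the former is not $\le$ the latter for large $m$. The at-most-exponential upper bound on $g$ is the wrong tool here. The paper instead uses the \emph{lower} structural fact, namely that $g$ is non-decreasing, so $f(\ell)/\ell = g(\ell)$ is non-decreasing, which immediately gives
\[
f(k/C) = \tfrac{k}{C}\,g(k/C) \le \tfrac{k}{C}\,g(k) = \frac{f(k)}{C}
\]
for every positive integer $C$. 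Now for any $C$ there is $\ell_0$ with $h(\ell) < f^{-1}(\ell)/C$ for $\ell \ge \ell_0$, and hence
\[
f(h(\ell)) < f\bigl(f^{-1}(\ell)/C\bigr) \le \frac{f(f^{-1}(\ell))}{C} = O(\ell/C),
\]
which is exactly $o(\ell)$. So the missing idea is the one-line inequality $f(k/C) \le f(k)/C$ coming from monotonicity of $g$; once you have it, no case analysis over the building blocks is needed.
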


\begin{proof}
	By the construction of the link machine encoding graphs, the maximum distance in columns we can reach in $\ell$ steps is bounded by $f(\ell)$. Since $f(\ell) = \omega(\ell)$, we have that $f\bigl( \ell/C \bigr) \le f(\ell)/C$ for any positive integer $C$. Thus, for any $C$, there is $\ell_0$ such that $h(\ell) < f^{-1}(\ell)/C$ for all $\ell \ge \ell_0$, and thus
	\[ f(h(\ell)) < f\bigl( f^{-1}(\ell)/C\bigr) \le \ell/C\]
	for all $\ell \ge \ell_0$, which implies the claim.
\end{proof}

\thmlowerbound*

\begin{proof}
	The nodes on the bottom row have the following possible outputs:
	\begin{enumerate}
		\item At least one node $v$ produces an error pointer $\dirup$. Then we must have a chain of $\dirup$ pointers all the way to the near-witness $w$ near the top row, and the chain has to be labelled with the input of $w$. The distance from $v$ to $w$ is $\Theta(f^{-1}(T(\hat{n})))$, and the claim follows.
		\item None of the nodes on the bottom row produce an error pointer $\dirup$, but at least one of them produces an error pointer $\dirright$. But then all nodes on the bottom row must output $\dirright$, and the bottom cycle has to be properly $2$-coloured.
		\item None of the nodes produce any error pointers. Then all nodes on the bottom row must solve problem $\Pi$.
	\end{enumerate}
	As $2$-colouring the bottom row is at least as hard as solving problem $\Pi$ on the bottom row, it is sufficient to argue that the third case requires $\Omega(f^{-1}(T(\hat{n})))$ rounds. The proof is by simulation. We assume a faster algorithm for $\Pi_M$ and use it to speed up the corresponding problem $\Pi$ on cycles.
	
	Let $A$ be an algorithm for $\Pi_M$ with running time $o(f^{-1}(T(\hat{n})))$. The algorithm has to solve the problem $\Pi$ on the bottom cycle. Now, given a cycle $C$ of length $\hat{n}$ as input, we create a virtual link machine encoding graph on top of the cycle as follows: each node creates the nodes in its column, their identifiers defined to be the identifier of the bottom node padded with the node's height, encoded in $\log h$ bits.
	
    To simulate an algorithm with running time $t = o(f^{-1}(T(\hat{n})))$ in this virtual graph, each node needs to learn the identifiers of all nodes in its radius-$t$ neighbourhood in the virtual graph. By Lemma~\ref{lem:lm-view}, the columns of those nodes are contained within distance $o(T(\hat{n}))$ in the virtual graph. Thus, we can recover the identifiers of the nodes by scanning the cycle $C$ up to distance $o(T(\hat{n}))$. Now each node $v$ can apply $A$ and find a solution for $\Pi$ on the cycle $C$ in time $o(T(\hat{n}))$, by outputting the output of the node at the bottom of the virtual column created by node $v$. This yields an algorithm with running time $o(T(\hat{n}))$ on cycles of length $\hat{n}$, a contradiction.
\end{proof}

\begin{remark}
	Note that we did not use the fact that our algorithms are deterministic in this proof. In fact, a similar argument can be applied to randomised algorithms. This is due to the fact that, as we will see later, we consider problems that are equally hard for randomised and deterministic algorithms. Also, on truncated link machine encoding graphs the only way to cheat with error pointers is to produce a 2-colouring or copy the input of nodes that are far on the graph, that is, to solve problems that are equally hard for randomised and deterministic algorithms.
\end{remark}

\subsection{\boldmath Instantiating the \lcl{} construction}
We consider the problems $\Pi_g$ and $\Pi_l$ defined on cycles as follows:
\begin{description}
	\item[\boldmath $\Pi_l$ (\emph{$3$-colouring}):] Output a proper $3$-colouring.

	The time complexity of this problem is $\Theta(\log^* n)$ \cite{linial92locality,cole86deterministic}, and it can be solved with one-sided algorithms. This is also clearly an $\lcl$ problem.
	\item[\boldmath $\Pi_g$ (\emph{safe $2$-colouring}):] Given an input in $\{0,1\}$, label the nodes with $\{0,1,2,\error\}$ such that
	\begin{itemize}[noitemsep]
		\item input-$0$ nodes are labelled with $0$,
		\item input-$1$ nodes are labelled with $1$, $2$, or $\E$,
		\item $1$ is never adjacent to $1$,
		\item $2$ is never adjacent to $2$,
		\item $\error$ is never adjacent to $0$, $1$, or $2$.
	\end{itemize}
	In essence, if we have an all-$1$ input, we can produce an all-$\error$ output, and if we have an all-$0$ input, we can produce an all-$0$ output. However, if we have a mixture of $0$s and $1$s, we must properly $2$-colour each contiguous chain of $1$s. The worst-case instance is a cycle with only one $0$, in which case we must properly $2$-colour a chain of length $n-1$.

	The time complexity of this problem is $\Theta(n)$, and it can be solved with one-sided algorithms. It is also clearly an $\lcl$ problem. Note that, unlike $2$-colouring, safe $2$-colouring is always solvable for any input (including odd cycles).
\end{description}

We now instantiate our \lcl{} construction using the link machines defined in \sectionref{sec:lm}. The  general recipe of these instantiations will be the following:
\begin{itemize}
    \item We start with a link machine program $M$ with growth $g$, and compute the function $f(\ell) = \ell g(\ell)$ that controls the speed-up.
    \item Next, we observe that there can be link machine encoding graphs with $n$ nodes, in which the bottom cycle has length $\hat{n}$ satisfying  $n = \Theta\bigl(\hat{n} \cdot f^{-1}(T(\hat{n}))\bigr)$, in which nodes of the bottom cycle see no errors within distance $\Theta(f^{-1}(T(\hat{n})))$.
    \item Now it follows from Theorems~\ref{thm:upper-bound} and \ref{thm:lower-bound} that when we instantiate the construction, we get problems of complexity
    \begin{enumerate}
        \item $T_1(n) = \Theta\bigl(f^{-1}(\hat{n})\bigr)$ when starting from $\Pi_g$, and
        \item $T_2(n) = \Theta\bigl(f^{-1}( \log^* \hat{n})\bigr)$ when starting from $\Pi_l$.
    \end{enumerate}
\end{itemize}
By considering each of the composite functions of Table~\ref{tab:composition}, and by applying Theorems~\ref{thm:upper-bound} and \ref{thm:lower-bound}, we obtain all of the new time complexities listed in Table~\ref{tab:contrib}. 

\begin{theorem}
	There exist \lcl{} problems of complexities
	\begin{enumerate}[noitemsep]
		\item $\Theta\bigl(n^{r/s}\bigr)$,
		\item $\Theta\bigl((\log^* n)^{q/p}\bigr)$,
	\end{enumerate}
	where $r$,$s$, $p$ and $q$ are positive integer constants, satisfying $q/p\le1$ and $r/s < 1/2$.
\end{theorem}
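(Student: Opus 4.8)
The plan is to feed the polynomial‑growth link machine of \tableref{tab:composition} into the construction of \sectionref{sec:link-machines} and then read off the complexity from \theoremref{thm:upper-bound} and \theoremref{thm:lower-bound}. For positive integers $a,b$ the composition $\MPOWK_{a}\circ\MROOTK'_{b}$ has growth $g(\ell)=\Theta(\ell^{a/b})$; one checks that this $g$ is non‑decreasing (its ``maximum register'' is monotone), lies in $\omega(1)$ whenever $a\ge 1$, and is trivially $2^{O(n)}$, so the hypotheses of the construction are met and $\Pi_M$ is a genuine \lcl{} problem. For such a machine $f(\ell)=\ell\,g(\ell)=\Theta(\ell^{1+a/b})$ is strictly increasing (so $\hat n$ is always well‑defined) and $f^{-1}(\ell)=\Theta(\ell^{b/(a+b)})$. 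The two families are then obtained by choosing $a,b$ to hit the desired exponent, using that truncated link machine encoding graphs realise the instances of size $n$ with bottom cycle of length $\Theta(\hat n)$ required by the recipe.

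For part~(1) I would put $a=s-2r$ and $b=r$; this is legitimate precisely because $r/s<1/2$ forces $s-2r\ge 1$ (and $s-r\ge 1$). Instantiate $\Pi_M$ with $\Pi=\Pi_g$, whose complexity on directed cycles is $T(n)=\Theta(n)$ and which is one‑sided. Then $f(\ell)=\Theta(\ell^{(s-r)/r})$, hence $f^{-1}(T(\hat n))=\Theta(\hat n^{r/(s-r)})$, and the defining relation $n=\Theta\bigl(\hat n\cdot f^{-1}(T(\hat n))\bigr)$ collapses to $n=\Theta(\hat n^{\,s/(s-r)})$, i.e.\ $\hat n=\Theta(n^{(s-r)/s})$. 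Substituting back, \theoremref{thm:upper-bound} and \theoremref{thm:lower-bound} give complexity $\Theta\bigl(f^{-1}(T(\hat n))\bigr)=\Theta(\hat n^{r/(s-r)})=\Theta(n^{r/s})$.

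For part~(2) I would first dispatch the case $q/p=1$, where the claim is simply the classical existence of an \lcl{} problem of complexity $\Theta(\log^* n)$~\cite{linial92locality,cole86deterministic}. For $q<p$, put $a=p-q\ge 1$ and $b=q$, and instantiate $\Pi_M$ with $\Pi=\Pi_l$ ($3$‑colouring of directed cycles, complexity $T(n)=\Theta(\log^* n)$, one‑sided). Now $f(\ell)=\Theta(\ell^{p/q})$, so $f^{-1}(T(\hat n))=\Theta\bigl((\log^*\hat n)^{q/p}\bigr)$, and $n=\Theta\bigl(\hat n\cdot(\log^*\hat n)^{q/p}\bigr)$ forces $\hat n\le n\le \hat n^{2}$ for all large $n$, whence $\log^*\hat n=\log^* n-O(1)=\Theta(\log^* n)$. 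Therefore \theoremref{thm:upper-bound} and \theoremref{thm:lower-bound} give complexity $\Theta\bigl((\log^*\hat n)^{q/p}\bigr)=\Theta\bigl((\log^* n)^{q/p}\bigr)$.

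The one step I expect to require genuine care is this last transfer from $\hat n$ to $n$ in part~(2): it succeeds only because the correction factor $(\log^*\hat n)^{q/p}$ is subpolynomial, so $\log^*$ is insensitive to it; the corresponding transfer in part~(1) is routine polynomial bookkeeping. Everything else is a direct appeal to \theoremref{thm:upper-bound}, \theoremref{thm:lower-bound}, and the building‑block growth analysis of \sectionref{sec:bb}, plus checking the small positivity side conditions ($s-2r\ge 1$ in part~(1), $p-q\ge 1$ in part~(2)), which are exactly the hypotheses $r/s<1/2$ and $q/p\le 1$.
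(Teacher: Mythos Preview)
Your proposal is correct and follows essentially the same approach as the paper: instantiate the construction with the polynomial-growth machine $\MPOWK_{a}\circ\MROOTK'_{b}$, apply Theorems~\ref{thm:upper-bound} and~\ref{thm:lower-bound}, and choose $a,b$ to hit the target exponent (the paper sets $q=r$, $p=s-2r$ for part~(1), exactly your choice). Your treatment is in fact slightly more careful than the paper's in two places---you handle the boundary case $q/p=1$ of part~(2) explicitly by citing the classical $\Theta(\log^* n)$ result (the paper's formula $q/(p+q)$ cannot reach exponent~$1$ with positive integers, so this case does need separate mention), and you justify the $\log^*\hat n=\Theta(\log^* n)$ transfer more explicitly than the paper's ``$\hat n=\widetilde\Theta(n)$''.
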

\begin{proof}
	Let $M = \MPOWK_p \circ \MROOTK'_q$ with growth $g(\ell) = \Theta(\ell^{p/q})$. We have
	\begin{itemize}[noitemsep]
		\item $f(\ell) = \Theta\bigl(\ell^{(p+q)/q}\bigr)$, and 
		\item $f^{-1}(x) = \Theta\bigl(x^{q/(p+q)}\bigr)$.
	\end{itemize}
    When $\Pi$ is $\Pi_g$ we obtain: 
	\begin{itemize}[noitemsep]
		\item $n = \Theta(\hat{n}^{(p+2q)/(p+q)})$
		\item $\hat{n} = \Theta\bigl(n^{(p+q)/(p+2q)}\bigr)$
		\item $T_1(n) = \Theta\bigl(\hat{n}^{q/(p+q)}\bigr) = \Theta\bigl(n^{q/(p+2q)}\bigr)$.
	\end{itemize}
    By setting $q=r$ and $p = s- 2r$ the claim follows.\\
    When $\Pi$ is $\Pi_l$ we obtain:
	\begin{itemize}[noitemsep]
		\item $\hat{n} = \widetilde\Theta(n)$
		\item $T_2(n) = \Theta\bigl((\log^* n)^{q/(p+q)}\bigr)$.
	\end{itemize}
	The claim follows by setting the values of $p$ and $q$ appropriately.
\end{proof}

\begin{theorem}
	There exist \lcl{} problems of complexities
	\begin{enumerate}[noitemsep]
		\item $\Theta( \log^{p/q} n )$, and
		\item $\Theta( \log^{p/q} \log^* n )$,
	\end{enumerate}
	where $p$ and $q$ are positive integer constants such that $p/q \ge 1$.
\end{theorem}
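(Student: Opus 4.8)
I would instantiate the recipe of the previous subsection with the link machine $M = \MEXP \circ \MPOWK_q \circ \MROOTK'_p$ from Table~\ref{tab:composition}, whose growth is $g(\ell) = 2^{\Theta(\ell^{q/p})}$; the composition is admissible precisely because $p/q \ge 1$ forces $q/p \le 1$, so $g$ is non-decreasing and lies in the required window between $\omega(1)$ and $2^{O(n)}$. Since $\ell^{q/p} = \omega(\log \ell)$, the speed-up function is $f(\ell) = \ell g(\ell) = 2^{\Theta(\ell^{q/p})}$ as well, and inverting the exponential gives $f^{-1}(x) = \Theta\bigl((\log x)^{p/q}\bigr) = \Theta(\log^{p/q} x)$ -- the $\Theta$ in the exponent of $g$ only rescales $f^{-1}$ by a constant factor. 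This is exactly the sibling construction to the theorem just proved, now pushed into the doubly-exponential regime.

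For part~(1) I would take $\Pi = \Pi_g$ (safe $2$-colouring, one-sided solvable, $T(\hat n) = \Theta(\hat n)$). Plugging into the defining relation $n = \Theta\bigl(\hat n \cdot f^{-1}(T(\hat n))\bigr)$ yields $n = \Theta(\hat n \log^{p/q}\hat n)$, hence $\hat n = \Theta(n/\log^{p/q} n)$ and in particular $\log \hat n = \Theta(\log n)$. Theorems~\ref{thm:upper-bound} and~\ref{thm:lower-bound} then give that $\Pi_M$ has complexity $\Theta\bigl(f^{-1}(T(\hat n))\bigr) = \Theta\bigl(f^{-1}(\hat n)\bigr) = \Theta(\log^{p/q}\hat n) = \Theta(\log^{p/q} n)$.

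For part~(2) I would reuse the same $M$ but take $\Pi = \Pi_l$ ($3$-colouring, one-sided solvable, $T(\hat n) = \Theta(\log^* \hat n)$). Now $n = \Theta\bigl(\hat n \cdot f^{-1}(\log^*\hat n)\bigr) = \Theta\bigl(\hat n \log^{p/q}\log^* \hat n\bigr)$, so $\hat n$ and $n$ differ only by a $\mypoly(\log\log^* n)$ factor; this is small enough that $\log^* \hat n = \log^* n + O(1)$ and therefore $\log^{p/q}\log^* \hat n = \Theta(\log^{p/q}\log^* n)$. Applying Theorems~\ref{thm:upper-bound} and~\ref{thm:lower-bound} once more, $\Pi_M$ has complexity $\Theta\bigl(f^{-1}(\log^*\hat n)\bigr) = \Theta(\log^{p/q}\log^*\hat n) = \Theta(\log^{p/q}\log^* n)$, as claimed.

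The point that needs care is the chain of asymptotic substitutions rather than any conceptual difficulty: that the linear factor $\ell$ is genuinely absorbed into $2^{\Theta(\ell^{q/p})}$, that inverting the composed exponential is clean, and -- the only real subtlety -- that replacing $\hat n$ by $n$ inside $\log^{p/q}(\cdot)$ and inside $\log^{p/q}\log^*(\cdot)$ changes the answer by at most a constant factor (using $\log \hat n = \Theta(\log n)$ in the first case and the extreme slowness of $\log^*$ in the second). Everything else is quoted directly from the framework: $M$ has admissible growth, $\Pi_g$ and $\Pi_l$ are one-sided-solvable \lcl{} problems on cycles with the stated complexities, and the matching bounds of Theorems~\ref{thm:upper-bound}--\ref{thm:lower-bound} apply verbatim.
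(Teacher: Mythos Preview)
Your proposal is correct and follows essentially the same approach as the paper: you use the same link machine $M = \MEXP \circ \MPOWK_q \circ \MROOTK'_p$ with growth $g(\ell) = 2^{\Theta(\ell^{q/p})}$, derive $f^{-1}(x) = \Theta(\log^{p/q} x)$, and instantiate with $\Pi_g$ and $\Pi_l$ exactly as the paper does. If anything, you give more detail than the paper in justifying why $\hat n$ can be replaced by $n$ inside the logarithmic expressions; the paper simply records $\hat n = \widetilde\Theta(n)$ and moves on.
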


\begin{proof}
	Let $M = \MEXP \circ \MPOWK_q \circ \MROOTK'_p$ with growth $g(\ell) = 2^{\Theta(\ell^{q/p})}$. We have
	\begin{itemize}[noitemsep]
		\item $f(\ell) = 2^{\Theta(\ell^{q/p})}$ and $f^{-1}(x) = \Theta(\log^{p/q} x)$,
		\item $\hat{n} = \widetilde\Theta(n)$.
	\end{itemize}
	Thus, the \lcl{} problem $\Pi_M$ has complexity
	\begin{itemize}[noitemsep]
		\item $T_1(n) = \Theta\bigl( \log^{p/q} \hat{n} \bigr) = \Theta\bigl( \log^{p/q} n \bigr)$ when $\Pi$ is $\Pi_g$, and
		\item $T_2(n) = \Theta\bigl( \log^{p/q} \log^* \hat{n} \bigr) = \Theta\bigl( \log^{p/q} \log^* n \bigr)$ when $\Pi$ is $\Pi_l$.
		\qedhere
	\end{itemize}
\end{proof}

\begin{theorem}
	There exist \lcl{} problems of complexities
	\begin{enumerate}[noitemsep]
		\item $2^{\Theta( \log^{q/p} n )}$, and
		\item $2^{\Theta( \log^{q/p} \log^* n )}$,
	\end{enumerate}
	where $p$ and $q$ are positive integer constants such that $q/p\le1$.
\end{theorem}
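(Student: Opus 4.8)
The plan is to run the general recipe of this subsection with a link machine whose growth carries a polylogarithm \emph{in the exponent}, namely $g(\ell) = 2^{\Theta(\log^{p/q}\ell)}$ with $p/q \ge 1$, and then invoke \theoremref{thm:upper-bound} and \theoremref{thm:lower-bound} with $\Pi_g$ and $\Pi_l$ exactly as in the previous two proofs. For the machine I would take the variant $M = \MEXP \circ \MPOWK_p \circ \MROOTK_q \circ \MLOG \circ \MCOUNT$ with $p \ge q$ of the composed programs in \tableref{tab:composition}, in which the $p$-th power is applied \emph{before} the $q$-th root. Tracing through \sectionref{sec:bb}: $\MLOG \circ \MCOUNT$ has output $\Theta(\log\ell)$, $\MROOTK_q$ makes it $\Theta((\log\ell)^{1/q})$, $\MPOWK_p$ makes it $\Theta((\log\ell)^{p/q})$, and $\MEXP$ makes it $2^{\Theta((\log\ell)^{p/q})}$. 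Since $p/q \ge 1$ we have $2^{\Theta(\log^{p/q}\ell)} \ge 2^{\Theta(\log\ell)} = \Theta(\ell)$, so this register dominates the linear scaffolding registers of $\MCOUNT$ and $\MLOG$ and is the register of maximum value; hence $g(\ell) = 2^{\Theta(\log^{p/q}\ell)}$. I would also check the routine hypotheses of \sectionref{sec:link-machines}: $g$ is non-decreasing, lies in $\omega(1)$, and is $2^{O(n)}$.

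The remaining work is a short asymptotic computation. Because $p/q \ge 1$, $f(\ell) = \ell\,g(\ell) = 2^{\Theta(\log^{p/q}\ell)}$, and inverting gives $f^{-1}(x) = 2^{\Theta(\log^{q/p}x)}$. Applying the recipe with $\Pi = \Pi_g$ (so $T(\hat n) = \Theta(\hat n)$), the quantity $\hat n$ solves $n = \Theta\bigl(\hat n \cdot f^{-1}(\hat n)\bigr) = \Theta\bigl(\hat n \cdot 2^{\Theta(\log^{q/p}\hat n)}\bigr)$; the correction factor $2^{\Theta(\log^{q/p}\hat n)}$ is $\hat n^{o(1)}$ because $q/p \le 1$ (merely a fixed polynomial in $\hat n$ in the borderline case $q/p = 1$), so $\log\hat n \sim \log n$ and $\log^{q/p}\hat n = \Theta(\log^{q/p}n)$; \theoremref{thm:upper-bound} and \theoremref{thm:lower-bound} then give $T_1(n) = \Theta\bigl(f^{-1}(\hat n)\bigr) = 2^{\Theta(\log^{q/p}n)}$. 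Applying the recipe with $\Pi = \Pi_l$ (so $T(\hat n) = \Theta(\log^*\hat n)$), the correction factor $f^{-1}(\log^*\hat n) = 2^{\Theta(\log^{q/p}\log^*\hat n)}$ is at most $\log^*\hat n$, so $\hat n = \widetilde\Theta(n)$ and $\log^*\hat n = \Theta(\log^* n)$, yielding $T_2(n) = \Theta\bigl(f^{-1}(\log^*\hat n)\bigr) = 2^{\Theta(\log^{q/p}\log^* n)}$. This gives both claimed complexities.

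I expect two points to need care. The first is the growth analysis: one must confirm that in the chosen composition it is genuinely the $\MEXP$-output register, and not the linear registers coming from $\MCOUNT$ and $\MLOG$ (both at $\Theta(\ell)$), that attains the maximum — this is exactly where $p/q \ge 1$ is used, and it is why we apply the power before the root, unlike in the construction of the preceding theorem. The second is making the estimate relating $\hat n$ to $n$ tight enough that the constant hidden in the exponent of $2^{\Theta(\log^{q/p}\hat n)}$ transfers to $2^{\Theta(\log^{q/p}n)}$, and similarly from $\log^*\hat n$ to $\log^* n$; this is immediate once one notes that $2^{\Theta(\log^{q/p}m)}$ is sub-polynomial in $m$ for $q/p < 1$ (and at most a fixed polynomial for $q/p = 1$, in which case the target is merely a polynomial and still of the asserted form), so that $n$ and $\hat n$ are polynomially — in the $\Pi_l$ instantiation even polylogarithmically — related, and both $\log^{q/p}(\cdot)$ and $\log^*(\cdot)$ are preserved up to constant factors. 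The $\lcl$ construction itself contributes nothing new, since \theoremref{thm:upper-bound} and \theoremref{thm:lower-bound} already supply the tight bound $\Theta\bigl(f^{-1}(T(\hat n))\bigr)$.
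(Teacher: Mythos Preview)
Your proposal is correct and matches the paper's proof almost verbatim: the same machine $M = \MEXP \circ \MPOWK_p \circ \MROOTK_q \circ \MLOG \circ \MCOUNT$, the same growth $g(\ell) = 2^{\Theta(\log^{p/q}\ell)}$, the same inversion $f^{-1}(x) = 2^{\Theta(\log^{q/p}x)}$, and the same transfer from $\hat n$ to $n$ (which the paper states tersely as $\hat n \in \Omega(n^{1/2}) \cap O(n)$). One small wording slip: in this composition the $q$-th root is applied \emph{before} the $p$-th power (composition reads right to left), not after as you write --- your block-by-block trace is nonetheless correct, and the actual difference from the preceding theorem is which of $p,q$ subscripts $\MPOWK$ versus $\MROOTK$, not the order of the two operations.
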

\begin{proof}
	Let $M = \MEXP \circ \MPOWK_p \circ \MROOTK_q \circ \MLOG \circ \MCOUNT$ with growth $g(\ell) = 2^{\Theta(\log^{p/q} \ell)}$. We have
	\begin{itemize}[noitemsep]
		\item $f(\ell) = 2^{\Theta(\log^{p/q} \ell)}$ and $f^{-1}(x) = 2^{\Theta(\log^{q/p} x)}$,
		\item $\hat{n}$ is $\Omega(n^{1/2})$ and $O(n)$.
	\end{itemize}
	Thus, the \lcl{} problem $\Pi_M$ has complexity
	\begin{itemize}[noitemsep]
		\item $T_1(n) = 2^{\Theta(\log^{q/p} \hat{n} )} = 2^{\Theta( \log^{q/p} n )}$ when $\Pi$ is $\Pi_g$, and
		\item $T_2(n) = 2^{\Theta(\log^{q/p} \log^* \hat{n} )} = 2^{\Theta( \log^{q/p} \log^* n )}$ when $\Pi$ is $\Pi_l$.
		\qedhere
	\end{itemize}
\end{proof}

\section*{Acknowledgements}

We thank Christopher Purcell for discussions, Sebastian Brandt for spotting an error in the preliminary version of the paper, and the anonymous reviewers for their useful comments.
This work was supported in part by the Academy of Finland, Grant 285721, the Ulla Tuominen Foundation, and ANR Project DESCARTES.

\DeclareUrlCommand{\Doi}{\urlstyle{same}}
\renewcommand{\doi}[1]{\href{http://dx.doi.org/#1}{\footnotesize\sf doi:\Doi{#1}}}

\bibliographystyle{plainnat}
\bibliography{lcl-complexity}

\end{document}